\theoremstyle{plain}
\newtheorem{thrm}{Theorem}[section]
\newtheorem{prop}[thrm]{Proposition}
\theoremstyle{definition}
\newtheorem{defn}[thrm]{Definition}
\newtheorem{post}{Postulate}[]
\theoremstyle{remark}
\newtheorem*{remark}{Remark}
\newcommand{\pj}[1] {\underbar{$#1$}}
\def\>{\rangle}
\def\<{\langle}
\def\comment#1{}
\def\commentg#1{}
\def\labell#1{\label{#1}}
\def\togli#1{}
\begin{document}
	
\title{The four postulates of quantum mechanics are three}
\author{Gabriele Carcassi$^1$, Lorenzo Maccone$^{2}$ and Christine A.
  Aidala$^1$ }\affiliation{\vbox{1.~Physics Department, University of
    Michigan, 450 Church Street, Ann Arbor, MI 48109-1040,
    United States}\\
  \vbox{2.~Dip.~Fisica and INFN Sez.\ Pavia, University of Pavia, via
    Bassi 6, I-27100 Pavia, Italy}}
\begin{abstract}
  The tensor product postulate of quantum mechanics states that the
  Hilbert space of a composite system is the tensor product of the
  components' Hilbert spaces. All current formalizations of quantum
  mechanics that do not contain this postulate contain some equivalent
  postulate or assumption (sometimes hidden). Here we give a natural
  definition of composite system as a set containing the component
  systems and show how one can logically derive the tensor product
  rule from the state postulate and from the measurement postulate. In
  other words, our paper reduces by one the number of postulates
  necessary to quantum mechanics.
\end{abstract}
	\pacs{}
\maketitle
	
In this paper we derive the tensor product postulate (which, hence,
loses its status of postulate) from two other postulates of quantum
mechanics: the state postulate and the measurement postulate.  The
tensor product postulate does not appear in all axiomatizations of
quantum mechanics: it has even been called ``postulate 0'' in some
literature \cite{zurek}. A widespread belief is that it is a direct
consequence of the superposition principle, and it is hence not a
necessary axiom. {\em This belief is mistaken}: the superposition
principle is encoded into the quantum axioms by requiring that the
state space is a {\em linear} vector space. This is, by itself,
insufficient to single out the tensor product, as other linear
products of linear spaces exist, such as the direct product, the
exterior/wedge product or the direct sum of vector spaces, which is
used in classical mechanics to combine state spaces of linear systems.
These are all maps from linear spaces to linear spaces but they differ
in how the linearity of one is mapped to the linearity of the others
\footnote{For example, in the tensor product
  $a \otimes (b+c) = a \otimes b + a \otimes c$ while in the direct
  product $a \times (b+c) = a \times b + 0 \times c$ where $0$ is the
  zero vector. Also, in the tensor product
  $r (a \otimes b) = (r a) \otimes b = a \otimes (r b)$ while in the
  direct product $r (a \times b) = r a \times r b$, where $r$ is a
  scalar.}.  This belief may have arisen from the seminal book of
Dirac \cite{diracbook}, who introduces tensor products (Chap.~20) by
appealing to linearity. However, he adds the seemingly innocuous
request that the product among spaces be distributive (rather,
bilinear), which is equivalent to postulating tensor products (or
linear functions of them). It is not an innocuous request. For example
it does not hold where the composite vector space of two linear spaces
is described by the direct product, e.g.~in classical mechanics, for
two strings of a guitar: it is not distributive.  [General classical
systems, not only linear ones, are also composed through the direct
product.] Of course, Dirac is not constructing an axiomatic
formulation, so his `sleight of hand' can be forgiven. In contrast,
von Neumann (\cite{vonneumannbook} Chap.~VI.2, also \cite{jauch})
introduces tensor products by noticing that this is a natural choice
in the position representation of wave mechanics (where they were
introduced in \cite{weyl,epr}), and then {\em explicitly postulates}
them in general: ``This rule of transformation is correct in any case
for the coordinate and momentum operators [...]  and it conforms with
the [observable axiom and its linearity principles], we therefore
postulate them generally.''  \cite{vonneumannbook}.  More mathematical
or conceptually-oriented modern formulations
(e.g.~\cite{ozawa,masanes,wootters,nielsenchuang}) introduce this
postulate explicitly.  An interesting alternative is provided in
\cite{ballentinebook,ballentinepaper}: after introducing tensor
products, Ballentine verifies a posteriori that they give the correct
laws of composition of probabilities. Similarly, Peres uses
relativistic locality \cite{peres}. While these procedures seemingly
bypass the need to postulate the tensor product, they do not guarantee
that this is the {\em only} possible way of introducing composite
systems in quantum mechanics. In the framework of quantum logic,
tensor products arise from some additional conditions \cite{matolcsi}
which (in contrast to what is done here) are not connected to the
other postulates.  In \cite{marmo,aerts} tensor products were obtained
by specifying additional physical or mathematical requirements.
\comment{ {\em Il resto della frase da spostare in supp material} }
	
Let us first provide a conceptual overview of our approach. We start from the natural definition of a composite system as the set
of two (or more) quantum systems. The composite system is therefore
made of system $A$ {\em and} (joined with) system $B$ and {\em nothing
  else}. The first key insight is that the first two postulates of
quantum theory (introduced below) already assume that the preparation
of one system is independent from the preparation of another
(statistical independence). In fact, we cannot even talk about a
system in the first place if we cannot characterize it independently.
The second key insight is that, using the law of composition of probabilities of independent events, we can \commentg{easily}find a map $M$ that takes the state of the component systems and gives the composite state for the statistically independent case. These insights are
enough to characterize mathematically the state space of the
composite: the linearity given by the Hilbert space, together with the
fact that the composite system is fully described by the observables
of $A$ and $B$, allows us to extend the construction from the statistically independent composite states to the general case (that includes entangled
states). So the work consists of two interrelated efforts: a physical
argument that starts from the first two postulates and leads to the
necessary existence of the composition map $M$ and its properties
together with a formal argument that shows how $M$ leads to the tensor
product.
	
\comment{mettere nel supplementary: 
}
	
This map $M$ acts on the state spaces of the subsystems. Each pure
state is identified by a ray $\pj{\psi}$, a subspace of the system's
Hilbert space comprising all vectors $\psi$ differing by their
(nonzero) modulus and phase: a one-dimensional complex subspace (a complex plane). In
the same way, constraining the observable $X$ to a particular outcome
value $x_0$ means identifying the subspace comprising all
non-normalized eigenvectors $|x_0\>$ of arbitrary phase such that
$X | x_0 \> = x_0 |x_0\>$. The map $M$ establishes a relationship between the states of the subsystems and the composite, so it is a map between subspaces, not vectors. Therefore, $M$ acts on the projective spaces, where all vectors within the same ray are ``collapsed'' into a single point (i.e.~a quotient space in the equivalence class), removing the unphysical ``overspecification'' of the phase and of the modulus. The physical requirements on $M$ are such that we can find a bilinear map $m$ between vectors that acts consistently with $M$ in terms of subspaces. This map $m$ is the tensor product.

\commentg{testo vecchio: As we need to map states and observables
between subsystems and composite system, the map $M$ derived from the
probability space is really a map between subspaces, namely the
projective spaces. We then use projective geometry to show that the
map $M$ on the projective spaces corresponds to a map $m$ on the
vector spaces which is the tensor product. This is done by showing that preparation independence and statistical
independence of the systems imply three conditions on the map $m$:}

More in detail, the physical requirements of statistical
independence, together with the fact that one can arbitrarily prepare the states of the subsystems, imply three conditions on the map $m$:
(H1)~totality: the map is defined on all states of the subsystems;
(H2)~bilinearity: the map is bilinear thanks to the fundamental
theorem of projective geometry; (H3)~span surjectivity: the span of
the image of the map coincides with the full composite Hilbert space.  We
then prove that, if these three conditions H1, H2 and H3 hold, then
the map $m$ is the tensor product, namely the Hilbert space of the
composite system is the tensor product of the components' Hilbert spaces. The tensor
product ``postulate'' hence loses its status of a postulate. An
overview of all these logical implications is given in
Fig.~\ref{f:fig}. The rest of the paper contains the sketch of this
argument, including all the physical arguments outlined above. The
supplementary material \cite{supp} contains the mathematical details.
	
\begin{figure}[ht]
  \epsfxsize=1.\hsize\leavevmode\epsffile{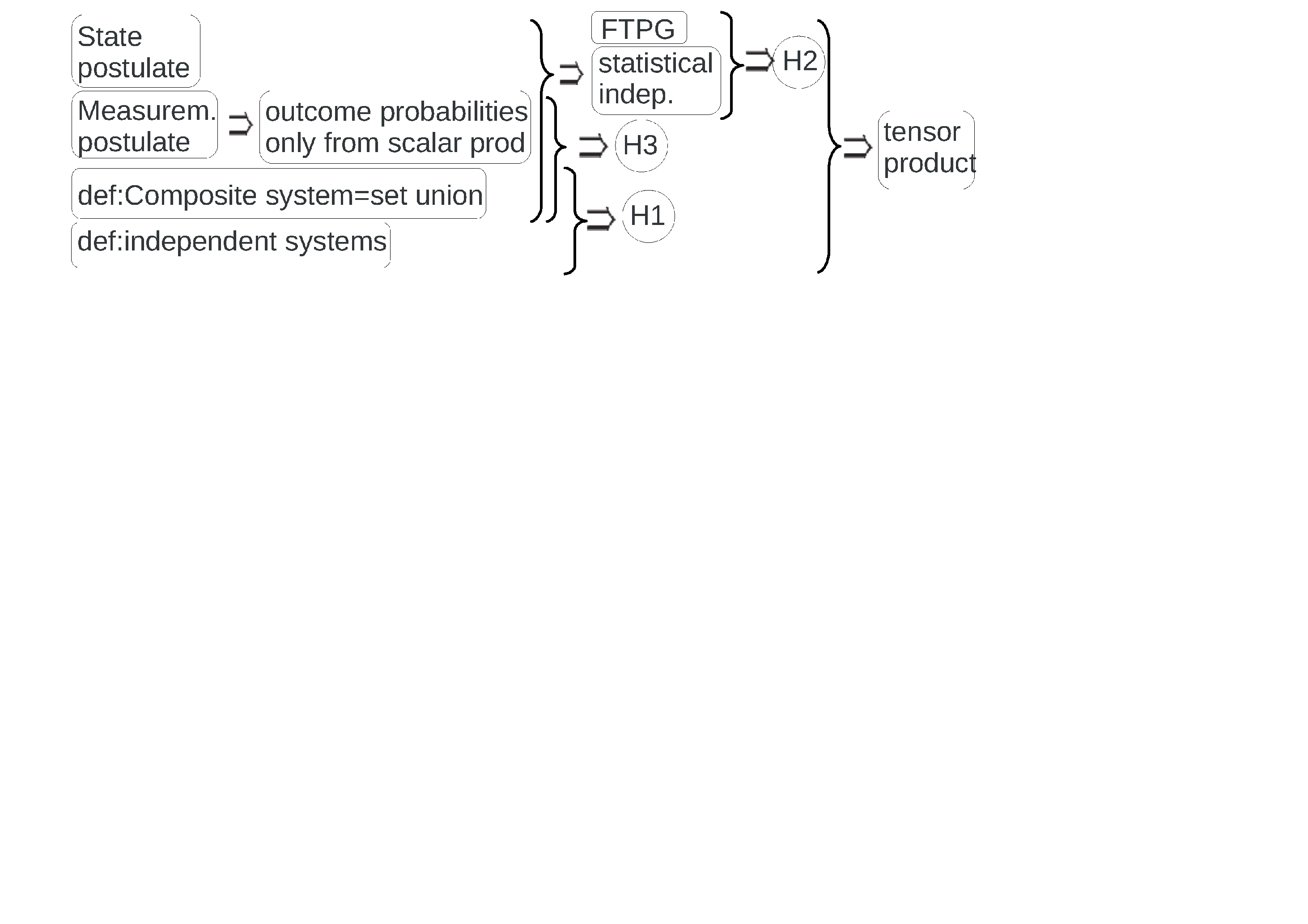}
  \caption{Schematic depiction of the logical implications used in
    this paper. FTPG stands for ``Fundamental Theorem of Projective
    Geometry''.  \label{f:fig}}\end{figure}
	
We start with the axiomatization of quantum mechanics based on the
following postulates
(e.g.~\cite{ozawa,masanes,wootters,nielsenchuang}): (a)~The pure state
of a system is described by a ray $\pj{\psi}$ corresponding to a set
of non-zero vectors $|\psi\>$ in a complex Hilbert space, and the
system's observable properties are described by self-adjoint operators
acting on that space; (b)~The probability that a measurement of a
property $X$, described by the operator with spectral decomposition
$\sum_{x,i}x\frac{|x_i\>\<x_i|}{\<x_i|x_i\>}$ ($i$ a degeneracy index),
returns a value $x$ given that the system is in state $\pj{\psi}$ is
$p(x|\psi)=\sum_i\frac{|\<\psi|x_i\>|^2}{\<x_i|x_i\>\<\psi|\psi\>}$ (Born
rule). (c)~The state space of a composite system is given by the
tensor product of the spaces of the component systems; (d)~The time
evolution of an isolated system is described by a unitary operator
acting on a vector representing the system state,
$|\psi({t})\>=U_{t}|\psi({t}=0)\>$ or, equivalently, by the
Schr\"odinger equation. The rest of quantum theory can be derived from
these axioms. While some axiomatizations introduce further postulates,
we will be using only (a) and (b) to derive (c), so the above are
sufficient to our aims.
	
\togli{This axiomatization implicitly contains a definition of
  ``quantum system'' which is crucial for what follows, so we need to
  clarify the assumptions that it contains. We will use the following
  definition for a quantum
  system\togli{$\stackon[1pt]={\mbox{\tiny
        def}}$}$\stackrel{\mbox{\tiny def}}=${\em ``a quantum degree
    of freedom with $d$ (possibly discrete, or continuous, infinite)
    mutually exclusive (commuting) values for each of its properties.
    Its mathematical description is through a Hilbert space of
    dimension $d$ which contains all the states that describe the
    values of its possible properties. In accordance with the
    postulate (a), these values correspond to a basis of the space,
    given by the eigenvectors of the observable corresponding to that
    property''}. \commentg{We may have to revise to be more clear.
    Where is this used?} }
	
Note that we limit ourselves to kinemati\-cal\-ly-inde\-pen\-dent
systems, where all state vectors $|\psi\>$ in the system's Hilbert
space $\cal H$ describe a valid state, {\em unconditioned on anything
  else}. We call this condition ``preparation independence'' and it
should be noted that the tensor product applies only in this case. For
example, the composite system of two electrons is not the tensor
product, rather the anti-symmetrized tensor product, precisely because
the second electron cannot be prepared in the same state of the first.
We note that restrictions due to superselection rules arise either
from practical (not fundamental) limitations on the actions of the
experimenter \cite{susskind,zanardi,zanardilloyd} or from the use of
ill-defined quantum systems. In the example above, the field is the
proper quantum system and the electrons are its excitations.
\footnote{We emphasize that the kinematic independence is inequivalent
  to dynamical independence (or isolation).  Indeed if two systems
  interact, their interaction may lead to dynamical restrictions in
  the state spaces.  Here we will not consider dynamical evolution,
  which is contained in postulate (d).}
	
The definition of a composite system as containing {\em only} the
collection of the subsystems means that any preparation of both
subsystems independently must correspond to the preparation of the
composite system. Since states are defined by postulate (a) as rays in
the respective Hilbert spaces, there must exist a map
$M : \pj{\cal A} \times \pj{\cal B } \to \pj{\cal C}$ that takes a
pair of states for the subsystems ($\pj{\cal A}$ and $\pj{\cal B }$
represent the projective space, where each point represents all vectors that identify the same state, and the Cartesian product is the set
of all possible pairs) and returns a state in the projective space
$\pj{\cal C}$ for the composite. To visualize the geometrical meaning
of $M$ directly within the Hilbert spaces, given a ray (a complex
plane) in each of $\cal A$ and $\cal B$, $M$ returns a ray (a complex
plane) in $\cal C$.  Our final goal will be to find a map
$m:{\cal A}\times{\cal B}\to{\cal C}$ that acts on vectors in the
Hilbert spaces $\cal A$, $\cal B$ and $\cal C$ consistently with $M$.
Namely, $\pj{m(a,b)}=M(\pj{a},\pj{b})$ where the underline sign
indicates the elements in the projective space. Again geometrically,
$m$ takes a vector in each of $\cal A$ and $\cal B$, and returns a
vector in $\cal C$ and we want this to be consistent with $M$ such
that vectors picked from the same rays will return vectors in the same
ray. We will prove that the map $m$ is the tensor product. We focus on
pure states here: the argument can be extended to mixed states using
standard tools \cite{ballentinebook}.
	
The map $M$ must be injective: as said above, different states of the
subsystems must correspond, by definition of composite system, to
different states of the composite. Moreover, preparation independence
implies that $M$, and hence $m$, must be total maps (condition H1):
each subsystem of the composite system can be independently prepared
and gives rise to a state of the composite.  H1 is not sufficient to
identify the tensor product: by itself it does not even guarantee that
the map $m$ is linear.
	
Postulate (b) contains the connection between quantum mechanics and
probability theory. It must then implicitly contain the axiomatization
of probability, e.g.~see \cite{ballentinepaper,ballentinebook,cox}.
One of the axioms of probability theory (axiom 4 in
\cite{ballentinepaper}) asserts that the joint probability events $a$
and $b$ given $z$ is $p(a\wedge b|z)=p(a|z)\:p(b|z\wedge a)$.
Consider $p(a \wedge b | \psi \wedge b)$ which represents the
probability of measurement outcomes $a$ on system $A$ and $b$ on
system $B$ given that system $A$ was prepared in $\psi$ and system $B$
in $b$.  We have
$p(a \wedge b | \psi \wedge b) = p(a | \psi \wedge b \wedge b) p(b |
\psi \wedge b) = p(a | \psi \wedge b ) p(b | \psi \wedge b)$. The Born
rule tells us that $p(a | \psi \wedge b) = |\<a|\psi\>|^2$ and that
$p(b | \psi \wedge b) = |\<b|b\>|^2 = 1$, where $|a\>$, $|b\>$ are the
normalized eigenstates relative to outcomes $a$ and $b$, and $|\psi\>$
is the normalized state vector. We have:
\begin{align}
  p(a\wedge b|\psi \wedge b)&=p(a|\psi)\\
  p(a\wedge b|a \wedge \phi)&=p(b|\phi)
\end{align}
In other words, since the probability for a measurement on one system
depends only on its pure state, the Born rule requires that the
measurement of one system is independent from the preparation of the
other. We call this property ``statistical
independence'' \footnote{One can also prove that the measurements on
  the components are independent as well (see Supplementary material),
  but we only strictly need preparation here.}.  It characterizes the
map $M$, since $M(\pj{a}, \pj{b})$ corresponds to the composite state
where $A$ and $B$ are prepared in the states $|a\>$ and $|b\>$.
Define $M_b(\pj{a}) = M(\pj{a},\pj{b})$. From the Born rule we find
\begin{align} &\Big|\Big\<M\left(\pj{a},\pj{b}\right)\Big|M\left(\pj{\psi},\pj{b}\right)\Big\>_{\cal C}\Big|^2
	=\Big|\Big\<M_b\left(\pj{a}\right)\Big|M_b\left(\pj{\psi}\right)\Big\>_{\cal C}\Big|^2
	\nonumber \\&
	=\left|\<a|\psi\>_{\cal A}\right|^2
                      \labell{questa},
\end{align}
where the first and second terms contain the inner product in the
composite space $\cal C$. [This is not a new assumption: it follows
from the measurement postulate (b) for the composite system.] This
means that, when one subsystem is prepared in an eigenstate of what is
measured there, the state space of the other is mapped preserving the
square of the inner product.  This implies orthogonality and the
hierarchy of subspaces are preserved through $M_b$, making $M_b$ a
colinear transformation by definition. Geometrically, recall that
$M_b$ maps rays to rays. The fact that $M_b$ is colinear means that it
also maps higher order subspaces to higher order subspaces (lines to
lines, planes to planes, and so on) while preserving inclusion (if a
line is within a plane, the mapped line will be within the mapped
plane). In this case, the fundamental theorem of projective geometry
\cite{fun} applies, which tells us that a unique semi-linear map $m_b$
that acts on the vectors exists in accordance with $M_b$.  Moreover,
conservation of probability further constrains it to be either linear
or antilinear. This tells us that the corresponding $m$ is either
linear or antilinear in the first argument. Namely, if equation
\eqref{questa} holds, then
	\begin{align}
	\<a|\psi\>=\<m(a,b)|m(\psi,b)\>\labell{h2}\;
	\\\mbox{ or }
	\<a|\psi\>=\<m(\psi,b)|m(a,b)\> \labell{h2b}.
\end{align}
In this setting, the antilinear case \eqref{h2b} corresponds to a
change of convention (much like a change of sign in the symplectic
form for classical mechanics) and can be ignored.  Given a Hilbert
space, in fact, we can imagine replacing all vectors and all the
operators with their Hermitian conjugate, mapping vectors into duals
$|\psi\>^\dag=\<\psi|$. These changes would effectively cancel out
leaving the physics unchanged: the two equations $A|w\>=B|z\>$ and
$\<w|A^\dag=\<z|B^\dag$ are equivalent.  (For example, in his first
papers Schr\"odinger used both signs in his equation: effectively
writing {\em two} equivalent equations with complex-conjugate
solutions \cite{sch}. Also Wigner pointed out this equivalence
\cite{wig}, pg.152). We can repeat the same analysis for the second
argument of $m$ to conclude that it is a bilinear map, condition (H2).
	
The last condition, span surjectivity (H3), follows directly from the
definition of a composite system. Since it is composed {\em only} of
the component systems, for any state $c$ of the composite system, we
must find at least one pair $|a\>$, $|b\>$ such that
$p(a\wedge b | c)\neq 0$.  Span-surjectivity follows: namely the span
of the map applied to all states in the component systems spans the
composite system state space. In other words, the composite does not
contain states that are totally independent of (i.e.~orthogonal to)
the states of the components.
	
We have obtained the conditions H1, H2 and H3 from the state postulate
(a), the measurement postulate (b) and the definitions of composite
and independent systems. We now prove that these three conditions
imply that the (up to now unspecified) composition rule $m$ is the
tensor product. More precisely, given a total, span-surjective,
bilinear map $m:{\cal A}\times{\cal B}\to{\cal C}$ that maps the
Hilbert spaces $\cal A$, $\cal B$ of the components into the Hilbert
space $\cal C$ of the composite and that preserves the square of the
inner product, we find that $\cal C $ is equivalent to
$\cal A \otimes \cal B $ and that $m=\otimes$.
	
Proof. Step 1: the bases of the component systems are mapped to a
basis of the composite system. Because of totality property (H1) and
because the square of the inner product is preserved, we can conclude
that, given two orthonormal bases $\{|a_i\>\}\in{\cal A}$ and
$\{|b_j\>\}\in{\cal B}$,
$|\<m(a_i,b_j)|m(a_k,b_\ell)\>|^2=\delta_{ik}\delta_{j\ell}$, namely
$\{|m(a_i,b_j)\>\}$ is an orthonormal set in $\cal C$.  Moreover, the
surjectivity property (H3) guarantees that in $\cal C$ no vectors are
orthogonal to this set. This implies that it is a basis for $\cal C$.
	
Step 2: use the universal property. The tensor product is uniquely
characterized, up to isomorphism, by a universal property regarding
bilinear maps: given two vector spaces $\cal A$ and $\cal B$, the
tensor product ${\cal A}\otimes{\cal B}$ and the associated bilinear
map $T : \cal A \times \cal B \to {\cal A}\otimes{\cal B}$ have the
property than any bilinear map $m:{\cal A}\times{\cal B}\to{\cal C}$
factors through $T$ uniquely.  This means that there exists a {\em
  unique} $I$, dependent on $m$, such that $I \circ T=m$.
In other words, the following diagram commutes:
\begin{center}
		\begin{tikzcd}\mathcal{A}\times\mathcal{B} \arrow[rd, "m"]\arrow[r, "T"] & \mathcal{A}\otimes\mathcal{B}\arrow[d, "I"] \\
			& \mathcal{C}
		\end{tikzcd}
\end{center}
Since $m : \mathcal{A} \times \mathcal{B} \to \mathcal{C}$ is a
bilinear operator (property H2), thanks to the universal property of
the tensor product we can find a unique linear operator
$I : \mathcal{A} \otimes \mathcal{B} \to \mathcal{C}$ such that
$m(a, b) = I(a \otimes b)$. The set $\{ I(a_i\otimes b_j)$
with $|a_i\>$ and $|b_j\>$ orthonormal bases for $\cal A$ and
${\cal B}\}$ forms a basis for $\cal C$, since
$I(a_i\otimes b_j)=m(a_i,b_j)$ and we have shown above that the
latter is a basis.  Thus,
\begin{align} 
  &\<I(a_i\otimes b_j)|I(a_k\otimes b_\ell)
    \>_{\mathcal{C}}=\<m(a_i,
    b_j)|m(a_k,b_\ell)\>_\mathcal{C} \nonumber\\& =
                                                  \delta_{ik}\delta_{j\ell}
                                                  = \<a_i\otimes
                                                  b_j| a_k \otimes b_\ell\>_{\otimes},
	\labell{ecco}\; 
\end{align}
where we used the orthonormality of the bases and the fact that
$|a_i\otimes b_j\>$ is a basis of the tensor product space
${\cal A}\otimes{\cal B}$. Since the function $I$ is a linear
function that maps an orthonormal basis of ${\cal A}\otimes{\cal B}$
to an orthonormal basis of $\cal C$, $I$ is a an isomorphism (a
bijection that preserves the mathematical structure) between
${\cal A}\otimes{\cal B}$ and $\cal C$. As
${\cal C}\cong{\cal A}\otimes{\cal B}$ are isomorphic as Hilbert
spaces, they are mathematically equivalent: $c \in {\cal C}$ and $I^{-1}(c)$ represent the same physical object. In this sense, we can loosely say that $I$ is the identity, as it connects spaces that are physically 
equivalent. So we can directly use the
tensor product to represent the composite state space. This means that
the map $m : \cal A \times \cal B \to \cal C$ is equivalent to the map
$\otimes : \cal A \times \cal B \to \cal A \otimes \cal B$ in the
sense that $m \circ I^{-1} = \otimes$.$\square$

\togli{\comment{Commento finale da aggiungere?  E' interessante
    perche' tutto cio' ci dice che non otteniamo necessariamente il
    prodotto tensore, ma il prodotto tensore modulo una fase locale
    che e' fisicamente irrilevante. Viene da chiedersi se c'e' una
    qualche notazione (evidentemente piu' generale di quella di Dirac)
    che ci permetta di eliminare questa ambiguita' fisicamente
    irrilevante... La notazione di Dirac gia' elimina la necessita' di
    avere una rappresentazione per descrivere gli stati. Pero'
    evidentemente non elimina del tutto la ambiguita' di
    rappresentazione perche' un cambio di coordinate sia sugli stati
    che sugli operatori mi lascia invariata la fisica, ma non lascia
    invariata la notazione di Dirac...  Probabilmente la notazione che
    elimina l'ambiguita' e' quella che utilizza le matrici densita'
    invece dei vettori di stato (vedi Ozawa \cite{ozawa} e Holevo): le
    matrici densita' normalizzate sono phase-independent:
    $\rho=|\psi\>\<\psi|$. Attenzione: l'interpretazione di una
    matrice mista $\rho=\sum_ip_i|\psi_i\>\<\psi_i|$ come ``il sistema
    e' nello stato $|\psi_i\>$ con probabilita' $p_i$'' e' una
    CONSEGUENZA della regola di Born, quindi nella formalizzazione del
    postulato degli stati in termini di matrici densita', questa
    interpretazione non puo' apparire perche' e' una conseguenza di un
    altro postulato.}}

\vskip 1\baselineskip A few comments on the proof: it is based on the
universal property of the tensor product, which uniquely characterizes
it. In step 1 we show that the bilinear map $m$ maps subsystems' bases
into the composite system basis. We also know that there exists a
tensor product map $ {T}=\otimes$ that can compose the vectors in
$\cal A$ and $\cal B$.  In step 2 we use the universal property: since
$m$ is a bilinear map, we are assured that there exists a unique
$I$ such that $I \circ T=m$. Since we show that $I$ is
an isomorphism, then $I$ bijectively maps vectors in $\cal C$
onto vectors in the tensor product space. Namely $m={T}=\otimes$. 
	
We conclude with some general comments. The tensor product structure
of quantum systems is not absolute, but depends on the observables
that are accessible \cite{zanardi,zanardilloyd}. This is due to the
fact that an agent that has access to a set of observables will define
quantum systems differently from an agent that has access to a
different set of observables. Where one agent sees a single system, an
agent that has access to less refined observables (and is then limited
by some superselection rules) can consider the same system as composed
of multiple subsystems. \comment{Supplementary material? }

It has been pointed out before that the quantum postulates are
redundant: in \cite{masanes,hartle} it was shown that the measurement
postulate (b) can be derived from the others (a), (c), (d). Here
instead we have shown how the tensor product postulate (c) can be
logically derived from the state postulate (a), the measurement
postulate (b) and a reasonable definition of independent systems, and
we have described the logical relations among them.  Of course, we do
not claim that this is the {\em only} way to obtain the tensor product
postulate from the others.
	
{\it Acknowledgements:} L.M. acknowledges useful discussions with
M.~Ozawa, P.~Zanardi, S.~Lloyd, D.~Zeh, G.~Auletta, A.~Aldeni and
funding from the MIUR Dipartimenti di Eccellenza 2018-2022 project F11I18000680001,  Attract project through the Eu Horizon 2020 research
and innovation programme under grant agreement No 777222. This material is based upon work supported by the U.S. Department of
Energy, Office of Science, National Quantum Information Science Research
Centers. G.C. and C.A.A. would like to thank M. J. Greenfield for reviewing the
mathematical details and acknowledge funding from the MCubed program
of the University of Michigan. G.C. and C.A.A.'s contribution to this work is part of a larger project, Assumptions of Physics, which aims to identify a handful of physical principles from which the basic laws can be rigorously derived.

	 \section{Supplementary information for ``The four postulates of quantum
  mechanics are three''. }\label{app}
 \subsection{ Mathematical formulation.}
 \setcounter{section}{1}
 \setcounter{equation}{0}
 \counterwithout{equation}{section}
 \renewcommand{\theequation}{S\arabic{equation}}
 
 Here we give the mathematical details of the proof sketched in the
 main paper.
 
 The core idea is that the probability space constrains what the state space for the composite system can be. Therefore we must develop a precise map between events in probability and their correspondents in terms of Hilbert space. Conceptually, the event $X = x_0$, the observable $X$ is equal to the value $x_0$, will correspond to the subspace spanned by all eigenstates of $X$ with eigenvalue $x_0$. The event $\psi$, the system was prepared in state $\psi$, will correspond to the ray (one-dimensional complex subspace) corresponding to the $\psi$ vector. Therefore, in general, all events in probability will correspond to subspaces (of different dimensionality) of the Hilbert space. Projective spaces are the right tool to keep track of subspaces. The proof, then, consists of establishing the correct definitions in the space of events, mapping those into elements of the projective space and then, from the projective space, constructing a map on the vector space directly.

 Let us establish, then, the notation we will be using to distinguish the projective space from
 the Hilbert space itself. If $X$ is a Hilbert space, we denote $\pj{X}$
 the projective space. The projective space is mathematically
 constructed from the Hilbert space by removing the origin and
 quotienting by the equivalence relationship $v \sim \lambda v$, $v\in
 X$ and $\lambda\in\mathbb{C}$. A quantum state is a point in
 projective space. Each point of the projective space is called a ray,
 because for a real vector space it would correspond to a line going through
 the origin, with the origin removed. As we are in a complex space, the
 ray should be thought as a complex plane without the origin, which is
 the space of the vectors reachable from a fixed one through
 multiplication by a complex number. It can also be thought as a
 subspace of dimension one.
 
 Given a vector $v \in X$, we will denote $\pj{v}$ the ray in the
 projective space corresponding to $v$. Note that $\pj{v}$ denotes a
 quantum state, without having picked a modulus or phase. Given two or
 more vectors $v_1, ..., v_n \in X$, the subspace of $X$ they span
 (i.e.~all the vectors reached by linear combinations) is noted by
 $Sp(v_1, ..., v_n)$. Note that this subspace will correspond to a set
 of rays in the projective space, which we note as $\pj{Sp(v_1, ...,
  v_n)}$. Geometrically, this can be thought as the smallest hyper-plane that contains all vectors. Given $v,w \in X$, we can write $P(v|w) = \frac{|\< v | w
  \>|^2}{\< v | v \>\< w | w \>}$ which corresponds to the probability
 of observing $v$ given $w$ was prepared. Note that $P(v|w) = P(\lambda
 v| \mu w)$, with non-null $\lambda,\mu\in{\mathbb C}$, and therefore
 one can write $P(\pj{v} | \pj{w})\equiv P(v|w)$ as a function of the
 rays. Geometrically, this corresponds to the angle between the two complex planes identified by the two vectors.

 \begin{post}\label{post_state}
  The state of a quantum system is described by a ray $\pj{\psi} = \{
  \alpha |\psi\> \, |$ non-null $\alpha \in
  \mathbb{C},|\psi\>\in\mathcal{H}\}$ in a separable complex Hilbert space
  $\mathcal{H}$, and the system's observable properties are described
  by self-adjoint operators acting on that space. 
 \end{post}
 
 \begin{remark}
  All proofs, except one, do not depend on the dimensionality of
  the space. The exception is proposition \ref{prop_fundProj} for which we prove the finite case by induction and then show that it holds in the limit. This would not work in the non-separable case, since the basis would not be countable.
 \end{remark}
 
 \begin{post}\label{post_measurements}
  The probability that a measurement of a property $X$, described by the operator with
  spectral decomposition $X = \sum_{x,i }x \frac{| x_i \> \< x_i |}{\< x_i | x_i \>}$  where $i$ is a degeneracy index, returns a value $x$ depends only on $X$ and on the state of the system $\pj{\psi}$ and is given by $P(x|\pj{\psi})=\sum_i \frac{\<\psi| x_i \> \< x_i |\psi\>}{\< \psi | \psi \>\< x_i | x_i \>}$
  (Born rule).\end{post}
 
 {Given two events $\pj{a}$ and $\pj{b}$, for example $X > x_1$ and $X < x_2$, their conjunction $\pj{a} \wedge \pj{b}$ is the event where both are true, $ x_1 < X < x_2$ in the example. In terms of our Hilbert spaces, both $\pj{a}$ and $\pj{b}$ correspond to subspaces and $\pj{a} \wedge \pj{b}$ is exactly the intersection of the two, which is also a subspace. We should not confuse $\pj{a} \wedge \pj{b}$ with $\pj{a} | \pj{b}$: the first refers to either
  preparation or measurement of both systems  in the respective state, while the second corresponds to preparing one system in one state and measuring the
  other system in the other state \cite{cox}.}
 
 \begin{defn}[Compatible states]\label{def_compatible}
  Let A and B be two systems. Let ${\mathcal{A}}$ and ${\mathcal{B}}$
  be their corresponding state spaces. We say two (pure) states $(\pj{a},
  \pj{b}) \in \pj{\mathcal{A}} \times \pj{\mathcal{B}}$ are compatible
  iff the respective systems can be prepared in such states at the
  same time. Formally, the proposition $\pj{a} \wedge \pj{b}$ is
  possible, which means it does not correspond to the empty set in the
  $\sigma$-algebra of the probability space\footnote{The impossible event is not an event with probability zero, rather it is an event that cannot be created at all. For example, ``the dice shows a number that is even and less than two'', or ``the electron is prepared in spin up along $x$ and also along $z$'' are impossible events.}.
 \end{defn}
 
 \begin{defn}[Preparation independence]\label{def_indep}
  Two systems are said independent iff the preparation of one does not affect the preparation of the other. Formally, all (pure) state pairs $(\pj{a}, \pj{b}) \in \pj{\mathcal{A}}\times \pj{\mathcal{B}}$ are compatible.
 \end{defn}
 
 \begin{prop}\label{prop_singleBorn}
  Given two systems, each prepared independently in their own state, the probability of measuring a value for one system depends only on the preparation of that system. That is, $P(\pj{a_1}|\pj{a_2}\wedge \pj{b})=P(\pj{a_1}|\pj{a_2})$.
 \end{prop}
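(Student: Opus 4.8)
First, I would check that both conditional probabilities in the statement make sense. By preparation independence (Definition~\ref{def_indep}) every pair of pure states is compatible, so in particular $\pj{a_2}\wedge\pj{b}$ is not the impossible event and $P(\,\cdot\,|\,\pj{a_2}\wedge\pj{b})$ is well defined; the quantity $P(\pj{a_1}|\pj{a_2})$ is well defined too, since $\pj{a_2}$ on its own is a legitimate preparation of $A$. The plan is then to reduce the left-hand side to the right-hand side using only the measurement postulate (Postulate~\ref{post_measurements}) together with the meaning of ``composite system''.

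In the key step I would use that $\pj{a_1}$ is an event concerning system $A$ alone — a ray in $\mathcal{A}$, i.e.\ the constraint of an observable of $A$ to one of its eigenvalues — while $\pj{a_2}\wedge\pj{b}$ describes a preparation in which $A$ was prepared in $\pj{a_2}$ and $B$ in $\pj{b}$. I would argue that in this joint preparation the state that Postulate~\ref{post_state} ascribes to $A$ is precisely $\pj{a_2}$: since the composite is by definition nothing more than the collection $\{A,B\}$, preparing $B$ is an operation on $B$ and does not enter the description of $A$, so ``preparing $A$ in $\pj{a_2}$ alongside $B$'' and ``preparing $A$ in $\pj{a_2}$'' leave $A$ in the same state. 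Postulate~\ref{post_measurements} now does the work: the probability of the outcome $\pj{a_1}$ depends \emph{only} on the measured observable and on the state of the measured system, and that state is $\pj{a_2}$ in both conditioning contexts; hence $P(\pj{a_1}|\pj{a_2}\wedge\pj{b})=P(\pj{a_1}|\pj{a_2})$, both sides then being given by the single-system Born expression $|\<a_1|a_2\>|^2/(\<a_1|a_1\>\<a_2|a_2\>)$ (with a sum over the degeneracy index of the measured value, if any).

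I expect the only real obstacle to be making that middle step airtight rather than circular: one must extract, from the probability space underlying Postulate~\ref{post_measurements} and from Definitions~\ref{def_compatible}--\ref{def_indep}, the precise statement that conditioning on a preparation of $B$ leaves $A$ in its prepared state and adds nothing to the measurement statistics of $A$ — without tacitly invoking the conclusion. Everything after that is routine substitution into the Born rule. As a by-product, applying the same reasoning with the roles of the two systems exchanged, and taking the measured outcome on $B$ to be the eigenstate $\pj{b}$ itself, gives $P(\pj{b}|\pj{a_2}\wedge\pj{b})=P(\pj{b}|\pj{b})=1$, the companion identity that is used, together with the probability axioms, in the construction of the composition map that follows.
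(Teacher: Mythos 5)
Your argument is essentially the paper's own proof: both read Postulate (b) as saying that the outcome probability depends only on the measured observable and on the state of the measured system, so the conditioning on $B$'s preparation is irrelevant and both sides reduce to the single-system Born expression $\frac{|\<a_1|a_2\>|^2}{\<a_1|a_1\>\<a_2|a_2\>}$. The extra care you take about well-definedness and about not arguing circularly is fine but does not change the route; the paper makes exactly the same one-step appeal to the measurement postulate.
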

 \begin{proof}
  We first note that, by postulate \ref{post_measurements}, the probability of measuring a value for one system depends only on the preparation of that system, which means that it is independent of the properties of any other system. Therefore $P(\pj{a_1} | \pj{a_2} \wedge \pj{b}) = \frac{\<a_1| a_2 \> \< a_2 |a_1\>}{\< a_1 | a_1 \>\< a_2 | a_2 \>} = P(\pj{a_1} | \pj{a_2})$
 \end{proof}
 
 \begin{defn}[Composite systems]\label{def_comp}
  Let A and B be two systems. The composite system C of A and B is formed by the simple collection of those and only those two systems, in the sense that it satisfies the following two requirements.
  \begin{enumerate}
   \item Every preparation of both subsystems is a preparation of the
   composite. Formally, let $\pj{\mathcal{C}}$ be the state space
   for C, there exists a map (not yet specified)
   $M:\pj{\mathcal{A}}\times\pj{\mathcal{B}}\to\pj{\mathcal{C}}$
   such that, for any compatible pair of (pure) states $(\pj{a},\pj{b}) \in \pj{\mathcal{A}}\times\pj{\mathcal{B}}$, the proposition $\pj{a} \wedge \pj{b}$ is equivalent to the (pure) state $M(\pj{a},\pj{b}) \in \pj{\mathcal{C}}$ where $M$ returns the state of the composite system where the subsystems were prepared in the given states. In other words, $\pj{a} \wedge \pj{b}$ and $M(\pj{a},\pj{b})$ correspond to the same event in probability space\footnote{We will end up proving that the map $M$ leads to the tensor product.}.
   \item For every preparation of the composite, local projective measurements must have at least one
   outcome with non-zero probability. Formally, for every $\pj{c} \in \pj{\mathcal{C}}$, we can find at least $\pj{a} \in \pj{\mathcal{A}}$ and $\pj{b} \in \pj{\mathcal{B}}$ such that $P(\pj{a} \wedge \pj{b}|\pj{c})\neq 0$. 
  \end{enumerate}
  It is important to understand that these requirements are necessary. Requirement 1 ensures that the composite system is well defined at least when the components are prepared independently. Conceptually, this ensures that the composite system contains all the properties of the components. Note that superselection rules or other restrictions may prevent the independent preparation of all possible pairs (e.g.~two fermions cannot be jointly prepared in the same state). The tensor product is recovered only when all pairs are compatible. Requirement 2 ensures that it does not contain properties that are orthogonal to all the components' properties, i.e.~that the composite system contains {\em only} the components. Violation of the second requirement would mean that some composite states would not define all the properties of the subsystems. That is, while the systems $A$ and $B$ by themselves would define observables $O_A$ and $O_B$, when grouped together all values would be assigned zero probability; those observables no longer exist. In this case, the nature of the system would have changed so radically we would no longer call it a composite system.
 \end{defn}
 
 \begin{prop}[Span surjectivity, H3]\label{prop_spanSurj}
  The map $M :
  \pj{\mathcal{A}}\times\pj{\mathcal{B}} \to \pj{\mathcal{C}}$ is span
  surjective, meaning that the span of the image coincides with the
  whole space. That is $Sp(\{ c \in \mathcal{C} \, | \, \pj{c} \in
  M(\pj{\mathcal{A}}, \pj{\mathcal{B}})\}) = \mathcal{C}$.
 \end{prop}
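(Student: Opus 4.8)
The plan is to argue by contradiction, turning requirement~2 of Definition~\ref{def_comp} directly into span surjectivity via an orthogonal-complement argument. Write $S=\{\,c\in\mathcal{C}\mid\pj{c}\in M(\pj{\mathcal{A}},\pj{\mathcal{B}})\,\}$ for the union of all the rays in the image of $M$, so the claim is $Sp(S)=\mathcal{C}$; in the separable case ``span'' should be read as closed linear span, and the argument will in fact establish the sharper statement $S^{\perp}=\{0\}$ (no nonzero vector of $\mathcal{C}$ is orthogonal to every element of $S$), which is precisely the form used later in Step~1 of the main proof. So I would suppose, toward a contradiction, that $S^{\perp}\neq\{0\}$ and fix a nonzero $c\in\mathcal{C}$ with $\<v|c\>=0$ for every $v\in S$.

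Next I would apply requirement~2 of Definition~\ref{def_comp} to the state $\pj{c}$: there exist $\pj{a}\in\pj{\mathcal{A}}$ and $\pj{b}\in\pj{\mathcal{B}}$ with $P(\pj{a}\wedge\pj{b}\mid\pj{c})\neq 0$. Since $P(\pj{a}\wedge\pj{b}\mid\pj{c})\neq 0$, the event $\pj{a}\wedge\pj{b}$ is possible, i.e. the pair $(\pj{a},\pj{b})$ is compatible (Definition~\ref{def_compatible}), so requirement~1 of Definition~\ref{def_comp} applies and identifies the event $\pj{a}\wedge\pj{b}$, in the probability space, with the one-dimensional event corresponding to the ray $M(\pj{a},\pj{b})$. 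Picking any nonzero representative $v$ of that ray, the rank-one projector $|v\>\<v|/\<v|v\>$ is a legitimate projective measurement on $\mathcal{C}$ (Postulate~\ref{post_state}), and Postulate~\ref{post_measurements} then gives $P(\pj{a}\wedge\pj{b}\mid\pj{c})=P(M(\pj{a},\pj{b})\mid\pj{c})=|\<v|c\>|^{2}/(\<v|v\>\<c|c\>)$. But $v\in S$ and $c\in S^{\perp}$, so $\<v|c\>=0$ and this probability vanishes, contradicting the choice of $(\pj{a},\pj{b})$. Hence $S^{\perp}=\{0\}$, equivalently $Sp(S)=\mathcal{C}$ (closed span), which is the claim.

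The only point that needs care is the translation between probability-space language and the Hilbert-space inner product: one must be explicit that a single ray is a bona fide measurement outcome, so that the Born rule of Postulate~\ref{post_measurements} assigns the event $M(\pj{a},\pj{b})$ the value $|\<v|c\>|^{2}/(\<v|v\>\<c|c\>)$, and that requirement~1 of Definition~\ref{def_comp} is exactly what licenses replacing the event $\pj{a}\wedge\pj{b}$ by that ray. A secondary caveat is dimensional: in a separable Hilbert space the algebraic span of a countable family of rays need not exhaust $\mathcal{C}$, so the proposition should be (and, from the proof, is) understood as $S^{\perp}=\{0\}$; nothing else in the argument depends on the dimension. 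I do not expect a genuine obstacle here — the mathematical content is entirely carried by requirement~2 of the composite-system definition, and the proposition is essentially a Hilbert-space restatement of it.
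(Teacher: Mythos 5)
Your argument is correct and follows essentially the same route as the paper's own proof: requirement~2 of the composite-system definition supplies, for any composite state, a pair whose image ray under $M$ has nonzero overlap with it, so no nonzero vector of $\mathcal{C}$ is orthogonal to the image and its (closed) span is all of $\mathcal{C}$. Your version merely phrases it as a contradiction and makes explicit the Born-rule translation and the closed-span caveat that the paper leaves implicit.
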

 \begin{proof}
  Consider $I=\{ c \in \mathcal{C} \, | \, \pj{c} \in M(\pj{\mathcal{A}}, \pj{\mathcal{B}})\}$ and its span. This forms a subspace of $\mathcal{C}$. By requirement 2 of \ref{def_comp}, for any $c \in \mathcal{C}$ we can always find $a \in \mathcal{A}$ and $b \in \mathcal{B}$ such that $P(\pj{a} \wedge \pj{b} | \pj{c} )\neq 0$. This means there is no element in $\mathcal{C}$ that is orthogonal to $Sp(I)$, therefore $Sp(I)$ must cover the whole $\mathcal{C}$.
 \end{proof}
 
 \begin{prop}[Totality, H1]\label{prop_totality}
  The map $M$ is in general a partial function.\footnote{A partial
   function is one that is not defined on the full domain. For
   example, $\sqrt(x)$ is a partial function since is not defined for
   $x<0$.} However, if A and B are independent, $M$ is a total function.\footnote{A total function is one that is defined on the full domain. For example, $x^2$ is a total function since it is defined for any $x$.}
 \end{prop}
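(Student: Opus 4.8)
The plan is to derive both assertions directly from the definitions, since this proposition is pure bookkeeping about the domain on which $M$ is guaranteed to exist; no projective-geometry or analytic input is needed. First I would recall that requirement~1 of Definition~\ref{def_comp} only posits the existence of $M(\pj{a},\pj{b})$ for pairs that are \emph{compatible} in the sense of Definition~\ref{def_compatible}, i.e.\ for which the proposition $\pj{a}\wedge\pj{b}$ is a possible (nonempty) event in the $\sigma$-algebra. Nothing in the definition constrains $M$ on incompatible pairs: if $\pj{a}\wedge\pj{b}$ is the impossible event there is simply no composite state for $M(\pj{a},\pj{b})$ to equal. Hence the natural domain of $M$ is the set of compatible pairs, a subset of $\pj{\mathcal{A}}\times\pj{\mathcal{B}}$, and $M$ is a priori only a partial function.

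For the first claim I would point out that this subset can be proper. The canonical example is a pair of identical fermions: ``particle~$1$ in state $\pj{a}$'' is incompatible with ``particle~$2$ in the same state $\pj{a}$'', so $\pj{a}\wedge\pj{a}$ is impossible and $M(\pj{a},\pj{a})$ is undefined; this already shows $M$ need not be total. For the second claim I would invoke Definition~\ref{def_indep}: if $A$ and $B$ are independent then \emph{every} pair $(\pj{a},\pj{b})\in\pj{\mathcal{A}}\times\pj{\mathcal{B}}$ is compatible by definition, so requirement~1 of Definition~\ref{def_comp} supplies $M(\pj{a},\pj{b})$ for all such pairs, making $M$ a total function. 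This is exactly condition~H1, and then the consistent vector map $m$ inherits totality as well.

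The one point that deserves care — conceptual rather than computational — is checking that ``compatible'' really is the correct domain restriction, i.e.\ that requirement~1 of Definition~\ref{def_comp} neither over- nor under-commits: it does not force $M$ to be defined where $\pj{a}\wedge\pj{b}$ is empty, and preparation independence is precisely the hypothesis under which there are no such empty conjunctions. Once this is spelled out, the proposition follows immediately, so I expect the only ``obstacle'' to be making the dependence on the earlier definitions explicit rather than any genuine mathematical difficulty.
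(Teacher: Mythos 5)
Your proposal is correct and follows essentially the same route as the paper: $M$ is only defined on compatible pairs by requirement~1 of the composite-system definition, hence partial in general, and preparation independence makes every pair compatible, hence $M$ total. The fermion example you add merely illustrates the partial case (the paper mentions it in the discussion of the definition) and does not change the argument.
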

 \begin{proof}
  As $M(\pj{a},\pj{b})$ is defined only if $(\pj{a},\pj{b}) \in \pj{\mathcal{A}}\times\pj{\mathcal{B}}$ are a compatible pair of pure states, it is not defined on pairs that are not compatible. If the two systems are independent, however, all pairs are allowed and $M$ is a total function.
 \end{proof}
 
 \begin{remark}
  As noted in \ref{def_comp}, if $\pj{a}$ and $\pj{b}$ are incompatible, $\pj{a} \wedge \pj{b}=\emptyset$ corresponds to the impossible event (i.e.~the empty set in the $\sigma$-algebra). This is not a state, and therefore $M(\pj{a},\pj{b})$ is not defined on incompatible pairs.
  
  However, in the end we will construct a map $m : \mathcal{A} \times \mathcal{B} \to \mathcal{C}$ on the vector spaces. There the zero vector plays the role of the impossible event. Therefore independent systems will map each pair to a non-zero element of the tensor product, while systems that are not independent will map incompatible states to the zero vector (e.g. the composite state of two electrons will exclude the cases where both electrons are in the same state).
 \end{remark}
 
 \begin{prop}[Statistical independence]\label{prop_statInd}
  Let $\pj{\mathcal{A}}$ and $\pj{\mathcal{B}}$ be the state spaces of two quantum systems and $\pj{\mathcal{C}}$ be the state space of their composite. The map $M : \pj{\mathcal{A}} \times \pj{\mathcal{B}} \to \pj{\mathcal{C}}$ is such that:
  \begin{align}
  P(M(\pj{a_1},\pj{b}) | M(\pj{a_2},\pj{b})) = P(\pj{a_1} | \pj{a_2})  \\ P(M(\pj{a},\pj{b_1}) | M(\pj{a},\pj{b_2})) = P(\pj{b_1} | \pj{b_2})
  \end{align}
  for all $a, a_1, a_2 \in \mathcal{A}$ and $b, b_1, b_2 \in \mathcal{B}$
 \end{prop}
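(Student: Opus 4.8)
The plan is to reduce the two identities about the composition map $M$ to statements about conjunctions of preparation and measurement events, and then evaluate those using only the probability product rule together with Proposition~\ref{prop_singleBorn}. First I would invoke requirement~1 of Definition~\ref{def_comp}: whenever the relevant pairs are compatible, $M(\pj{a_1},\pj{b})$ and $\pj{a_1}\wedge\pj{b}$ are the \emph{same} event in the probability space, and likewise $M(\pj{a_2},\pj{b})$ corresponds to $\pj{a_2}\wedge\pj{b}$. Hence $P(M(\pj{a_1},\pj{b})\,|\,M(\pj{a_2},\pj{b}))=P(\pj{a_1}\wedge\pj{b}\,|\,\pj{a_2}\wedge\pj{b})$, where the conditioning event $\pj{a_2}\wedge\pj{b}$ is read as ``$A$ prepared in $\pj{a_2}$ and $B$ prepared in $\pj{b}$'' and the conditioned event $\pj{a_1}\wedge\pj{b}$ as ``$a_1$ measured on $A$ and $b$ measured on $B$''. (Here I restrict to where $M$ is defined, i.e.\ to compatible pairs, which by Proposition~\ref{prop_totality} is the whole domain when the systems are independent.)

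Next I would apply the probability product rule $p(a\wedge b|z)=p(a|z)\,p(b|z\wedge a)$ (axiom~4 of \cite{ballentinepaper}, see also \cite{cox}) with $a\mapsto\pj{b}$, $b\mapsto\pj{a_1}$, $z\mapsto\pj{a_2}\wedge\pj{b}$, which gives $P(\pj{a_1}\wedge\pj{b}\,|\,\pj{a_2}\wedge\pj{b})=P(\pj{b}\,|\,\pj{a_2}\wedge\pj{b})\,P(\pj{a_1}\,|\,\pj{a_2}\wedge\pj{b}\wedge\pj{b})$. Using idempotency of conjunction, $\pj{b}\wedge\pj{b}=\pj{b}$, the second factor is $P(\pj{a_1}\,|\,\pj{a_2}\wedge\pj{b})$, which by Proposition~\ref{prop_singleBorn} equals $P(\pj{a_1}\,|\,\pj{a_2})$ since a measurement on $A$ depends only on the preparation of $A$. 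The first factor $P(\pj{b}\,|\,\pj{a_2}\wedge\pj{b})$ equals, by the same proposition with the roles of $A$ and $B$ exchanged, $P(\pj{b}\,|\,\pj{b})=|\langle b|b\rangle|^2/(\langle b|b\rangle\langle b|b\rangle)=1$. Multiplying the two factors yields $P(\pj{a_1}\wedge\pj{b}\,|\,\pj{a_2}\wedge\pj{b})=P(\pj{a_1}\,|\,\pj{a_2})$, i.e.\ the first identity; the second follows verbatim after interchanging $A$ and $B$.

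The arithmetic is routine; the one place requiring care is the probabilistic bookkeeping — keeping the preparation slot and the measurement slot distinct even though the same ray $\pj{b}$ occupies both, and making sure the chain rule is applied to a genuinely nonempty conditioning event, which is exactly what compatibility of the pair guarantees. The main conceptual point a reader might question is the legitimacy of the very first step, namely that Definition~\ref{def_comp} licenses substituting $M(\pj{a},\pj{b})$ for $\pj{a}\wedge\pj{b}$ inside a conditional probability; but this is precisely the content of requirement~1 (the two correspond to the same event), so no new assumption is introduced.
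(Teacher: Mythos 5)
Your proof is correct and follows essentially the same route as the paper's: identify $M(\pj{a},\pj{b})$ with the event $\pj{a}\wedge\pj{b}$ via requirement~1 of Definition~\ref{def_comp}, apply the probability product rule, use idempotency of the conjunction, and reduce both factors with Proposition~\ref{prop_singleBorn} (the first factor giving $P(\pj{b}|\pj{b})=1$). Your extra remarks on compatibility/totality and on keeping the preparation and measurement slots distinct are sensible clarifications but do not change the argument.
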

 \begin{proof}
  By \ref{prop_singleBorn} we have $P(\pj{a_1} | \pj{a_2} \wedge \pj{b}) = P(\pj{a_1} | \pj{a_2})$ and similarly $P(\pj{b_1} | \pj{a} \wedge \pj{b_2}) = P(\pj{b_1} | \pj{b_2})$.  Using standard probability rules and remembering that $M(\pj{a},\pj{b})\equiv \pj{a} \wedge \pj{b}$ by \ref{def_comp}, we have $P(M(\pj{a_1},\pj{b}) | M(\pj{a_2},\pj{b})) = P(\pj{a_1} \wedge \pj{b} | \pj{a_2} \wedge \pj{b})  = P(\pj{b} | \pj{a_2} \wedge \pj{b}) P(\pj{a_1} | \pj{a_2} \wedge \pj{b} \wedge \pj{b})  = P(\pj{b} | \pj{b}) P(\pj{a_1} | \pj{a_2} \wedge \pj{b}) = P(\pj{a_1} | \pj{a_2})$, since trivially $P(\pj{b}|\pj{b})=1$. Similarly $P(M(\pj{a},\pj{b_1}) | M(\pj{a},\pj{b_2})) =P(\pj{a} \wedge \pj{b_1} | \pj{a} \wedge \pj{b_2}) = P(\pj{b_1} | \pj{b_2})$
 \end{proof}
 
 \begin{prop}[Fundamental theorem of projective geometry]\label{prop_fundProj}
  Let $X$ and $Y$ be two separable complex Hilbert spaces and $\pj{X}$ and $\pj{Y}$ their respective projective spaces. Let $M : \pj{X} \to \pj{Y}$ be a map such that $P(\pj{v}|\pj{w}) = P(M(\pj{v})|M(\pj{w}))$. Then we can find, up to a total phase, a unique map $m : X \to Y$ such that $\pj{m(v)}=M(\pj{v})$. Moreover, $m$ is either linear, $\<v|w\> = \<m(v)|m(w)\>$, or anti-linear, $\<v|w\> = \<m(w)|m(v)\>$.
 \end{prop}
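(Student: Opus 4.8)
The plan is to give a self-contained, Wigner-style construction: fix an orthonormal basis of $X$, reconstruct $m$ on the dense span of that basis coordinate by coordinate using only the transition-probability constraint, and then promote the construction to the separable setting by continuity. Two preliminaries set the stage. First, $M$ is injective: if $M(\pj v)=M(\pj w)$ then $1=P(M(\pj v)|M(\pj w))=P(\pj v|\pj w)$, and equality in Cauchy--Schwarz forces $\pj v=\pj w$. Second, $M$ preserves orthogonality of rays in both directions, since $P(\pj v|\pj w)=0\iff P(M(\pj v)|M(\pj w))=0$; hence, fixing an orthonormal basis $\{e_k\}$ of $X$ and a unit representative $f_k$ of each ray $M(\pj{e_k})$, the family $\{f_k\}$ is orthonormal in $Y$. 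I also record that $M$ is automatically continuous as a map of projective spaces: $P(\pj{v_n}|\pj v)\to 1$ implies $\pj{v_n}\to\pj v$, and applying the hypothesis the same holds for the images; this will be needed for the limit.

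Next I would fix phases and locate images. For $k\neq 1$ the ray $M(\pj{e_1+e_k})$ tests to $\tfrac12(\delta_{j1}+\delta_{jk})$ against each $\pj{e_j}$, so its unit representative lies in the span of $f_1,f_k$ with coefficients of equal modulus; after fixing the phase of $f_1$ once and for all I rescale each $f_k$ by a phase so that $M(\pj{e_1+e_k})=\pj{f_1+f_k}$. Testing $\pj{e_1+i e_k}$ against $\pj{e_1+e_k}$ then pins $M(\pj{e_1+i e_k})=\pj{f_1+\sigma_k i f_k}$ for a sign $\sigma_k\in\{+1,-1\}$ attached to the $k$-th coordinate. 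Now take $v=\sum_k c_k e_k$ with $c_1\neq 0$, normalise its representative so that $c_1=1$, and write a representative of $M(\pj v)$ as $w=\sum_k d_k f_k$. From $P(\pj{e_k}|\pj v)=P(\pj{f_k}|M(\pj v))$ one gets $|d_k|/\|w\|=|c_k|/\|v\|$; rescaling $w$ so that $d_1=1$ forces $\|w\|=\|v\|$, hence $|d_k|=|c_k|$; then comparing $P(\pj{e_1+e_k}|\pj v)$ and $P(\pj{e_1+i e_k}|\pj v)$ with their images pins both the real and imaginary parts of $d_k$, giving $d_k=c_k$ when $\sigma_k=+1$ and $d_k=\bar c_k$ when $\sigma_k=-1$, uniformly in $v$.

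The heart of the argument — and the step I expect to be the main obstacle — is to show that all $\sigma_k$ coincide, so that $M$ is globally linear-like or globally conjugate-like rather than a coordinate-wise mixture; in Wigner-type theorems this global rigidity is precisely where the real work lies. Here it can be settled by one explicit test: by the preceding analysis $M(\pj{e_1+e_j+e_k})=\pj{f_1+f_j+f_k}$ and $M(\pj{e_1+i e_j+i e_k})=\pj{f_1+\sigma_j i f_j+\sigma_k i f_k}$, so comparing $P(\pj{e_1+e_j+e_k}|\pj{e_1+i e_j+i e_k})$ on both sides yields $(\sigma_j+\sigma_k)^2=4$, hence $\sigma_j=\sigma_k$. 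Writing $\sigma$ for the common sign, I \emph{define} $m$ on the dense span $D$ of $\{e_k\}$ by $m(\sum_k c_k e_k)=\sum_k c_k f_k$ if $\sigma=+1$ and $m(\sum_k c_k e_k)=\sum_k\bar c_k f_k$ if $\sigma=-1$: this is linear (resp. conjugate-linear) and isometric by orthonormality, and the coordinate-wise identification above shows exactly that $\pj{m(v)}=M(\pj v)$ on $D$. A direct computation of $\<m(v)|m(w)\>$ then gives $\<v|w\>$ in the first case and $\<w|v\>$ in the second, which is the stated alternative.

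It remains to pass to the separable case and to settle uniqueness. Every finite-coordinate formula above is anchored to the single fixed pair $(e_1,f_1)$, so the formulas on different finite truncations agree and $m$ is consistently defined on $D$; being an isometry it is bounded, hence extends uniquely by continuity to all of $X$, and since $M$ is continuous on projective space the extension still satisfies $\pj{m(v)}=M(\pj v)$ for every $v\in X$. For uniqueness up to a total phase: if $m'$ also induces $M$ then $m'(v)=\lambda(v)m(v)$ with $\lambda(v)\neq 0$, and evaluating on sums $u+u'$ with $\pj u\neq\pj{u'}$ together with additivity forces $\lambda$ to be constant, while $\<m'(v)|m'(w)\>=\<m(v)|m(w)\>$ forces $|\lambda|=1$; a similar computation (for $\dim\geq 2$) shows one cannot have a linear and an antilinear solution simultaneously. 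Finally, as anticipated in the Remark, the inductive phase argument must be carried out inside finite-dimensional coordinate subspaces before taking the limit, which is exactly where separability (countability of the basis) is used.
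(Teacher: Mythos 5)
Your construction is correct, and although it is a Wigner-style argument like the paper's, the key steps are genuinely different. The paper works projectively: it establishes colinearity of $M$, gauge-fixes the images of $e_i$ and $e_1+e_i$, obtains the per-coordinate dichotomy $\tau_i(c)\in\{c,c^*\}$ from the two conditions $|c|=|\tau_i(c)|$ and $\cos(\arg c)=\cos(\arg\tau_i(c))$, enforces cross-coordinate consistency by intersecting the two subspaces that must contain the image of $e_i-e_j$, and then reaches general vectors by induction on the number of components, an explicit countable limit, and a separate argument for vectors with no $e_1$ component. You instead determine the image of a general $v$ with nonvanishing first coordinate in one shot by testing against the fixed rays of $e_k$, $e_1+e_k$ and $e_1+i e_k$ (the third test ray replaces the paper's $\cos(\arg)$ condition and pins the sign $\sigma_k$ per coordinate up front, and Parseval guarantees the image representative lies in the closed span of the $f_k$), you settle the global linear/antilinear rigidity with the single identity $(\sigma_j+\sigma_k)^2=4$ from comparing $e_1+e_j+e_k$ with $e_1+ie_j+ie_k$, and you dispose of the infinite-dimensional case and of the vectors orthogonal to $e_1$ simultaneously by density together with the observation that a probability-preserving $M$ is an isometry, hence continuous, on projective space. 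The trade-off: your route avoids the induction and the separate zero-first-coordinate case and gets both the dichotomy and its globality from short closed-form probability computations, while the paper's route keeps the projective-geometry (colinearity) content explicit — which is the point of calling it FTPG — and stays within finite algebra until a single limiting step. One presentational slip you should fix: the coordinate identification is proved only for $c_1\neq 0$, so $\pj{m(v)}=M(\pj{v})$ is not literally established ``on $D$'' but on the dense set of rays with nonvanishing first coordinate; your own continuity/density argument then extends it to all of $X$, including the $c_1=0$ vectors of $D$, so the conclusion stands — just state the extension in that order.
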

 
 \begin{remark}
  The above proposition is, for the most part, an adaptation of the fundamental theorem of projective geometry \cite{fun}. The conservation of the probability imposes the semi-linear map to be either linear or anti-linear (i.e. conjugate-linear). This is not new, it is essentially Wigner's theorem, but the proof we offer is insightful as it clearly shows the connection between the construction of the map and the choice of gauge.
 \end{remark}
 
 \begin{proof}
  First we note that, given an orthonormal basis $\{e_i\}_{i \in I}$ over $X$, we can use $M$ to construct a corresponding basis over $Y' \subseteq Y$ where $\pj{Y'} = M(\pj{X})$. In fact, for each $\pj{e_i}$, pick a unit $u_i \in M(\pj{e_i})$. We have $\delta_{ij} = |\<e_i | e_j \>|^2 = P(e_i | e_j) = P(\pj{e_i}|\pj{e_j}) = P(M(\pj{e_i})|M(\pj{e_j})) = P(\pj{u_i}|\pj{u_j})= |\<u_i | u_j \>|^2$. The set $\{u_i\}_{i \in I}$ spans the entire $Y'$ since for all $y \in Y'$ we can find $x \in X$ and at least one $u_i$ such that $|\<y | u_i \>|^2 = P(\pj{y}|\pj{u_i}) = P(M(\pj{x})|M(\pj{e_i})) = P(x | e_i) = |\<x | e_i \>|^2 \neq 0$. Note that we have an arbitrary choice for each $u_i$, since we have to pick a vector from the unit circle (i.e. a phase for each basis vector). This corresponds to a choice of gauge.
  
  We also note that the map is colinear, meaning that if $U_X, V_X \subseteq X$ are two subspaces such that $U_X \subset V_X$, then $U_Y, V_Y \subseteq Y$ such that $\pj{U_Y} = M(\pj{U_X})$ and $\pj{V_Y} = M(\pj{V_X})$ are subspaces of $Y$ and $U_Y \subset V_Y$. In fact, take an orthonormal basis $\{e_i\}_{i \in I} \subset X$ such that $\{e_i\}_{i \in I_U \subset I} \subset \{e_i\}_{i \in I_V \subset I}$ are bases for $U_X$ and $V_X$ respectively. An element of $X$ belongs to $U_X$ if and only if it is not orthogonal only to elements of the basis of $U_X$ and belongs to $V_X$ only if it not orthogonal only to elements of the basis of $V_X$. As the map $M$ preserves orthogonality, these relationships are preserved by the map. Therefore $U_Y$ and $V_Y$ are subspaces of $Y$ such that $U_Y \subset V_Y$.
  
  Additionally we note that, for any colinear map, given two subspaces $U_1, U_2 \subseteq X$ we have $M(\pj{Sp(U_1, U_2)}) = \pj{Sp(M(\pj{U_1}), M(\pj{U_2}))}$. In fact, $\pj{Sp(U_1,U_2)}$ is the smallest subspace containing all vectors in $\pj{U_1}$ and $\pj{U_2}$. In the same way, $\pj{Sp(M(U_1),M(U_2))}$ is the smallest subspace containing all vectors in $M(\pj{U_1})$ and $M(\pj{U_2})$. Since the subspace inclusion is preserved by $M$, we must have $M(\pj{Sp(U_1, U_2)}) = \pj{Sp(M(\pj{U_1}), M(\pj{U_2}))}$.
  
  We now use the gauge freedom to redefine the basis such that for all $i$ we have $M(\pj{e_i}) = \pj{v_i}$ and $M(\pj{e_1 + e_i}) = \pj{v_1 + v_i}$. Let $v_1 = u_1$. This is the only arbitrary choice we make, and corresponds to the choice of a global phase. For each $i>1$, consider $e_1 + e_i$. This will belong to the subspace $Sp(e_1, e_i)$. This subspace, when mapped through $M$, will give us the subspace spanned by $v_1$ and $u_i$. That is, $M(\pj{Sp(e_1, e_i)})=\pj{Sp(v_1, u_i)}$. This means we can find a unique $k \in \mathbb{C}$ such that $M(\pj{e_1+e_i})=\pj{v_1+ku_i}$. We fix $v_i = k u_i$. Note that $P(\pj{e_1} | \pj{e_1+e_i}) = \frac{1}{2}=  P(\pj{e_i} | \pj{e_1+e_i})=P(\pj{v_1} | \pj{v_1+k u_i})=P(\pj{u_i} | \pj{v_1+k u_i})$. Therefore $|k| = 1$ and $k u_i = v_i$ is a unit vector.
  
  Now we want to show that $M(\pj{e_1 + c e_i})=\pj{v_1 + \tau_i(c)
   v_i}$ where either $\tau_i(c) = c$ or $\tau_i(c) = c^\dagger$. For
  each $i$, consider $w = e_1 + c e_i \in Sp(e_1, e_i)$. Since
  $M(\pj{w}) \subset \pj{Sp(v_1, v_i)}$, there must be a $\tau_i(c)$
  such that $\pj{v_1 + \tau_i(c) v_i} = M(\pj{w})$. Since we must have
  $P(\pj{e_i} | \pj{w}) = P(\pj{v_i} | M(\pj{w}))$ and $P(\pj{e_1 +
   e_i} | \pj{w}) = P(\pj{v_1 + v_i} | M(\pj{w}))$, we must have
  $|c| = |\tau_i(c)|$ and $cos(arg(c)) = cos(arg(\tau_i(c)))$ for any
  $c$. This means that either $\tau_i(c) = c$ or $\tau_i(c) =
  c^*$.
  
  Next we want to show that $\tau_i(c) = \tau_j(c)$ for all pairs $(i,j)$. That is, either we have to take the complex conjugate of all components or of none. Consider $e_i - e_j$. We have $\pj{e_i - e_j} \subset \pj{Sp(e_i, e_j)}$ and, for any $c \in \mathbb{C}$, $\pj{e_i - e_j} \subset \pj{Sp(e_1 + c e_i, e_1 + c e_j)}$. By construction, we have $M(\pj{e_i - e_j})\subset \pj{Sp(v_i, v_j)}$ and $M(\pj{e_i - e_j})\subset \pj{Sp(v_1 + \tau_i(c) v_i, v_1 + \tau_j(c) v_j)}$. Therefore $M(\pj{e_i - e_j}) = \pj{Sp(v_i, v_j)} \cap \pj{Sp(v_1 + \tau_i(c) v_i, v_1 + \tau_j(c) v_j)} = \pj{\tau_i(c) v_i - \tau_j(c) v_j}$. This means that, for all $c$, $\tau_i(c) = \tau_j(c)$.
  
  Now we show that for all $c_2, ..., c_n \in \mathbb{C}$ we have $M(\pj{e_1 + c_2 e_2 + ... + c_n e_n}) = \pj{v_1 + \tau(c_2) v_2 + ... + \tau(c_n) v_n}$. We prove this by induction. If only the first two components are non-zero, we have  $M(\pj{e_1 + c_2 e_2}) = \pj{v_1 + \tau(c_2) v_2}$ by construction. Let $2 < p \leq n$. If we assume $M(\pj{e_1 + c_2 e_2 + ... + c_{p-1} e_{p-1}}) = \pj{v_1 + \tau(c_2) v_2 + ... + \tau(c_{p-1}) v_{p-1}}$, then $M(\pj{e_1 + c_2 e_2 + ... + c_p e_p}) \subset M(\pj{Sp(e_1 + c_2 e_2 + ... + c_{p-1} e_{p-1}, e_p})) = \pj{Sp(v_1 + c_2 v_2 + ... + c_{p-1} v_{p-1}, v_p)}$. This means that there exists $k_p \in \mathbb{C}$ such that $M(\pj{e_1 + c_2 e_2 + ... + c_p e_p}) = \pj{v_1 + c_2 v_2 + ... + c_{p-1} v_{p-1} + k_p v_p)}$. But we also have $M(\pj{e_1 + c_2 e_2 + ... + c_p e_p}) \subset M(\pj{Sp(e_1 + c_p e_p, c_2 e_2 + ... + c_{p-1} e_{p-1}})) = \pj{Sp(v_1 + \tau(c_p) v_p, c_2 v_2 + ... + c_{p-1} v_{p-1}, v_p)}$. The only way this can work is if $k_p = \tau(c_p)$.
  
  The above works also over a countable sum. That is, for all $c_2, ..., c_n, ... \in \mathbb{C}$ we have $M(\pj{e_1 + c_2 e_2 + ... + c_n e_n + ...}) = \pj{v_1 + \tau(c_2) v_2 + ... + \tau(c_n) v_n + ...}$. Let $X$ be a separable space. Let $a = \sum_{k=1}^\infty c_k e_k$ such that $c_1 = 1$. Let $a_i = \sum_{k=1}^i c_k e_k$ be the sum of the first $i$ components. We have $\lim\limits_{i \to \infty} a_i = a$. Let $b_i = \sum_{k=1}^i \tau(c_k) v_k$ and $b = \sum_{k=1}^\infty \tau(c_k) e_k$. We have $\lim\limits_{i \to \infty} b_i = b$. We already know that $\pj{b_i} = M(\pj{a_i})$ for all finite $i$. We need to show that $\pj{b} = M(\pj{a})$. First note that, given $a,b \in Y$, $a=b$ if and only if $\<a,c\> = \<b,c\>$ for all $c \in Y$. Therefore $\pj{a}=\pj{b}$ if and only if $P(\pj{a}, \pj{c}) = P(\pj{b}, \pj{c})$ for all $c \in Y$. For all $c \in X$ we have $\lim\limits_{i \to \infty} P(\pj{a_i}, \pj{c}) = P(\pj{a}, \pj{c}) = P(M(\pj{a}), M(\pj{c}))$ we also have  $\lim\limits_{i \to \infty} P(\pj{a_i}, \pj{c}) = \lim\limits_{i \to \infty} P(M(\pj{a_i}), M(\pj{c})) = \lim\limits_{i \to \infty} P(b_i, M(\pj{c})) = P(\pj{b}, M(\pj{c}))$. Therefore $P(M(\pj{a}), M(\pj{c})) = P(\pj{b}, M(\pj{c}))$ for all $c \in X$. Note that $M$ is bijective over $Y'=M(\pj{X})$. Therefore $P(M(\pj{a}), \pj{c}) = P(\pj{b}, \pj{c})$ for all $c \in Y$ and $\pj{b} = M(\pj{a})$.
  
  We also need to show the above works when there is no component on the first element of the basis. That is, for all $c_2, ..., c_n \in \mathbb{C}$ we have $M(\underbar{$c_2 e_2 + ... + c_n e_n$}) = \underbar{$\tau(c_2) v_2 + ... + \tau(c_n) v_n$}$. First note that $M(\underbar{$c_2 e_2 + ... + c_n e_n$}) \subset M(\underbar{$Sp(e_2, ..., e_n)$)} = \underbar{$Sp(v_2, ..., v_n)$}$. Also note that $M(\underbar{$c_2 e_2 + ... + c_n e_n$}) \subset M(\underbar{$Sp(e_1, e_1 + c_2 e_2 + ... + c_n e_n)$)} = \underbar{$Sp(v_1, v_1 + \tau(c_2) v_2 + ... + \tau(c_n) v_n)$}$. The only way this can work is if $M(\underbar{$c_2 e_2 + ... + c_n e_n$}) = \underbar{$\tau(c_2) v_2 + ... + \tau(c_n) v_n$}$. With same reasoning as before, we can extend the sum to the countably infinite case.
  
  We can now define $m : X \to Y$ such that $m(e_i) = v_i$ for all $i$ and $m(\sum_{i \in I} c_i e_i) = \sum_{i \in I} \tau(c_i) v_i$. This means $\underbar{$m(\sum_{i \in I} c_i e_i)$} = M(\underbar{$\sum_{i \in I} c_i e_i$})$. Moreover, if $\tau(c) = c$ we have $\<m(\sum_{i \in I} c_i e_i)|m(\sum_{j \in I} d_j e_j)\> = \<\sum_{i \in I} c_i v_i|\sum_{j \in I} d_j v_j\> = c_i^* d_j \delta_{ij} = \<\sum_{i \in I} c_i e_i|\sum_{j \in I} d_j e_j\>$. On the other hand, if $\tau(c) = c^*$ we have $\<m(\sum_{i \in I} c_i e_i)|m(\sum_{j \in I} d_j e_j)\> = \<\sum_{i \in I} c_i^* v_i|\sum_{j \in I} d_j^* v_j\> = \<\sum_{j \in I} d_j v_j|\sum_{i \in I} c_i v_i\> = d_i^* c_j \delta_{ij} = \<\sum_{j \in I} d_j e_j|\sum_{i \in I} c_i e_i\>$. This can be extended to the case where the basis is countable.
 \end{proof}
 
 \begin{remark}
  The fact that the proposition identifies either a linear map or an anti-linear (i.e.~conjugate-linear) corresponds, in physics terms, to a choice of convention. As analogies: a change in metric signature in relativity would change the mathematical space but not the physics; in classical phase-space, a change in signature of the symplectic form would change the mathematical space, but not the physics it represents. These choices are widely recognized as a matter of personal preference.
  
  In simple terms, for a Hilbert space, the conjugate
  vector space is equivalent to the dual space, so we
  could equivalently choose one or the other. For
  example, Schr\"odinger, in the papers in which he
  introduces the Schr\"odinger equation, writes it
  with both signs, as the choice of sign of the imaginary part of the wave function is arbitrary: one sign refers to
  the Hilbert space, the other to the dual space, namely
  to the complex-conjugate wave function \cite{sch}. So we can think of the anti-linear map as one that preserves the inner product but maps ket vectors into bra vectors. Looking ahead, the above result does not exclude a composition map similar to the tensor product, but that maps the kets of one or both subsystems into bras in the composite system. This would only make the representation of the composite physical system more complicated, as we need to keep track of the different conventions in the different subspaces. Therefore, without changing the physics, we can always mathematically redefine the second space so that the resulting map is linear. With this in mind, we will assume that the map between the spaces is linear, which will in turn lead to identifying the tensor product as a unique composition map.
  
  Another way to look at this is that Hermitian operators, and therefore all 
  the physics, are invariant under an anti-linear transformation.
  In contrast, anti-Hermitian operators will change sign. This 
  changes the connection between the generators and the generated 
  transformations (i.e.~while $A$ generates $exp(\frac{Ab}{\imath\hbar})$ 
  on one space, the mapped $A$ will generate 
  $exp(-\frac{Ab}{\imath\hbar})$ in the mapped space). Note that the 
  choice of whether to put the minus or not is arbitrary as long as one is consistent across all generators and transformations.
  Similarly, we typically define $[A, B] = AB - BA$ but we could have 
  alternatively chosen $[A, B] = BA - AB$. The anti-linear map is simply 
  a change of that convention.
  
  Note that this unnecessary subtlety could in principle be avoided by
  reformulating quantum mechanics in terms of quantum states given by
  density matrices $\rho=|\psi\>\<\psi|$ (which contain both  kets and
  bras), as is done, for example in \cite{ozawa,holevo}. In this paper
  we employed the more familiar formulation in which quantum states are
  rays in Hilbert space (identified either by kets or bras).
 \end{remark}
 
 \begin{prop}[Bilinearity, H2]\label{prop_bilinearity}
  Given $M$ in \ref{def_comp}, if we can find an $m : \mathcal{A} \times \mathcal{B} \to \mathcal{C}$ such that for all $(a,b) \in \mathcal{A} \times \mathcal{B}$, we have $\pj{m(a,b)} = M(\pj{a}, \pj{b})$, then $m$ must be bilinear. That is:
  \begin{align}
  m(k_1a_1 + k_2a_2, b)=k_1m(a_1, b) + k_2m(a_2, b) \\
  m(a, k_1b_1 + k_2b_2)=k_1m(a, b_1) + k_2m(a, b_2)
  \end{align}
  for all $a, a_1, a_2 \in \mathcal{A}$, $b, b_1, b_2 \in \mathcal{B}$ and $k_1, k_2 \in \mathbb{C}$.
 \end{prop}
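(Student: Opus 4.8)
\emph{Proof idea.} The plan is to reduce bilinearity of $m$ to two applications of the fundamental theorem of projective geometry (Proposition \ref{prop_fundProj}), one per slot, using statistical independence (Proposition \ref{prop_statInd}) to verify its hypothesis. Throughout we are in the independent case, so $M$ is a total map (Proposition \ref{prop_totality}) and $m$ is defined on all of $\mathcal{A}\times\mathcal{B}$.

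First I would fix a nonzero $b\in\mathcal{B}$ and introduce the section $M_b:\pj{\mathcal{A}}\to\pj{\mathcal{C}}$, $M_b(\pj{a})=M(\pj{a},\pj{b})$. By Proposition \ref{prop_statInd} we have $P(M_b(\pj{a_1})|M_b(\pj{a_2}))=P(\pj{a_1}|\pj{a_2})$ for all $a_1,a_2$, so $M_b$ preserves $P$ and hence satisfies the hypothesis of Proposition \ref{prop_fundProj} with $X=\mathcal{A}$, $Y=\mathcal{C}$. The given $m$, restricted to its first argument, is a vector map $a\mapsto m(a,b)$ whose projectivization is exactly $M_b$, i.e.\ $\pj{m(a,b)}=M_b(\pj{a})$. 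Proposition \ref{prop_fundProj} then identifies this section, up to an overall phase (which may depend on $b$), with the unique semi-linear lift of $M_b$; in particular $a\mapsto m(a,b)$ is either linear or anti-linear. As explained in the remark following Proposition \ref{prop_fundProj}, the anti-linear alternative is merely a choice of convention (it amounts to replacing the target by its conjugate space), so we adopt once and for all the convention in which the lift is linear, obtaining $m(k_1a_1+k_2a_2,b)=k_1m(a_1,b)+k_2m(a_2,b)$. The degenerate case $b=0$ is covered by the second-slot linearity established next, since then $m(a,0)=0$.

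Next I would run the same argument with the two slots exchanged, this time invoking the second identity of Proposition \ref{prop_statInd}, $P(M(\pj{a},\pj{b_1})|M(\pj{a},\pj{b_2}))=P(\pj{b_1}|\pj{b_2})$: for each fixed nonzero $a$, the section $b\mapsto m(a,b)$ lifts a probability-preserving projective map and is therefore linear, so $m(a,k_1b_1+k_2b_2)=k_1m(a,b_1)+k_2m(a,b_2)$. The global-phase ambiguity in each section is harmless here, because it is constant in the argument over which we differentiate and hence factors out of both sides of each linearity identity. Combining the two steps gives bilinearity of $m$.

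The main obstacle is the passage from ``$\pj{m(a,b)}=M_b(\pj{a})$'' to ``$m(\cdot,b)$ is semi-linear'': a priori a lift of a projective map only needs to agree ray by ray and could be rescaled pathologically on a subspace, so this step genuinely relies on the uniqueness-up-to-phase content of Proposition \ref{prop_fundProj}. One then has to commit to the linear branch globally, so that the \emph{same} convention is applied to every section and to both slots; it is worth recording that this is consistent precisely because demanding that $m$ be semi-linear in each argument separately already forces the linear-versus-antilinear choices in the two slots to be compatible.
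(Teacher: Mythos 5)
Your proposal is correct and takes essentially the same route as the paper's own proof: fix one argument, use statistical independence (Proposition~\ref{prop_statInd}) to verify the probability-preservation hypothesis of Proposition~\ref{prop_fundProj}, identify the section $a\mapsto m(a,b)$ with the resulting lift up to a $b$-dependent scalar that factors out of the linearity identity, adopt the linear branch by convention, and repeat with the slots exchanged. One small caveat: your closing aside is not accurate—semi-linearity in each slot separately does \emph{not} force the linear-versus-antilinear choices to agree (a map linear in $a$ and conjugate-linear in $b$ also preserves the squared inner product; this is precisely the ``kets to bras in one subsystem'' possibility mentioned in the remark after Proposition~\ref{prop_fundProj}), but since you, like the paper, fix the linear convention in each factor, this does not affect the argument.
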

 
 \begin{proof}
  If we fix $b \in \mathcal{B}$, then we have $M_b : \pj{\mathcal{A}} \to \pj{\mathcal{C}}$ where $M_b(\pj{a}) = M(\pj{a}, \pj{b})$. By \ref{prop_statInd} and \ref{prop_fundProj} we can find a linear map $m_b : \mathcal{A} \to \mathcal{C}$ such that $\pj{m_b(a)} = M_b(\pj{a}) = M(\pj{a}, \pj{b})$. As this must map subspace to subspace, we must have $m(a, b) = k m_b(a)$ for some $k \in \mathbb{C}$. Since $m_b$ is linear, we have $m(k_1a_1 + k_2a_2, b)=k_1m(a_1, b) + k_2m(a_2, b)$ for any $a_1, a_2 \in \mathcal{A}$ and $k_1, k_2 \in \mathbb{C}$. We can repeat the argument fixing $a \in \mathcal{A}$, and find $m(a, k_1b_1 + k_2b_2)=k_1m(a, b_1) + k_2m(a, b_2)$ for any $b_1, b_2 \in \mathcal{B}$ and $k_1, k_2 \in \mathbb{C}$.
 \end{proof}
 
 \begin{prop}[Subsystems' basis gives composite system
  basis]\label{prop_basis}
  Let $\{a_i\}_{i\in I}$ and $\{b_j\}_{j \in J}$ be bases of $\mathcal{A}$ and $\mathcal{B}$ respectively, then a set of unit vectors $\{e_{ij}\}_{(i,j) \in I \times J} \subset \mathcal{C}$  such that $e_{ij} \in M(\underbar{$a_i$}, \underbar{$b_j$})$ forms a basis for $\mathcal{C}$.
 \end{prop}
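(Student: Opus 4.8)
\emph{Proof proposal.} The plan is to show that $\{e_{ij}\}_{(i,j)\in I\times J}$ is a complete orthonormal system in $\mathcal{C}$. Since the statement presupposes that each $e_{ij}\in M(\pj{a_i},\pj{b_j})$ exists, I work in the case of independent systems, where Proposition \ref{prop_totality} makes $M$ total, so every $M(\pj{a_i},\pj{b_j})$ is a genuine ray and each unit vector $e_{ij}$ is well defined (its residual phase being irrelevant by Postulate \ref{post_state}).

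\emph{Step 1 (orthonormality).} I would compute $|\<e_{ij}|e_{k\ell}\>|^2 = P(\pj{e_{ij}}\,|\,\pj{e_{k\ell}}) = P\big(M(\pj{a_i},\pj{b_j})\,\big|\,M(\pj{a_k},\pj{b_\ell})\big)$ and then, using the identification $M(\pj{a},\pj{b})\equiv\pj{a}\wedge\pj{b}$ of Definition \ref{def_comp}, rewrite it as $P(\pj{a_i}\wedge\pj{b_j}\,|\,\pj{a_k}\wedge\pj{b_\ell})$. By monotonicity of probability this is bounded above both by $P(\pj{a_i}\,|\,\pj{a_k}\wedge\pj{b_\ell})$ and by $P(\pj{b_j}\,|\,\pj{a_k}\wedge\pj{b_\ell})$, which Proposition \ref{prop_singleBorn} reduces to $P(\pj{a_i}|\pj{a_k})=|\<a_i|a_k\>|^2=\delta_{ik}$ and $P(\pj{b_j}|\pj{b_\ell})=|\<b_j|b_\ell\>|^2=\delta_{j\ell}$. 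Hence $|\<e_{ij}|e_{k\ell}\>|^2$ vanishes unless $i=k$ and $j=\ell$, while for $i=k,\,j=\ell$ it equals $P(\pj{e_{ij}}|\pj{e_{ij}})=1$; so $|\<e_{ij}|e_{k\ell}\>|^2=\delta_{ik}\delta_{j\ell}$ and $\{e_{ij}\}$ is orthonormal.

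\emph{Step 2 (completeness).} Let $m:\mathcal{A}\times\mathcal{B}\to\mathcal{C}$ be the bilinear map of Proposition \ref{prop_bilinearity}, so that $\pj{m(a,b)}=M(\pj{a},\pj{b})$ and in particular $m(a_i,b_j)$ is a nonzero scalar multiple of $e_{ij}$. For $a=\sum_i c_i a_i$ and $b=\sum_j d_j b_j$, bilinearity together with the continuity of $m$ in each argument (inherited from the isometric structure of Proposition \ref{prop_fundProj} via Proposition \ref{prop_bilinearity}) gives $m(a,b)=\sum_{i,j}c_i d_j\,m(a_i,b_j)$, a norm-convergent sum lying in the closed linear span of $\{e_{ij}\}$. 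Now take any $c\in\mathcal{C}$ orthogonal to every $e_{ij}$. By span surjectivity (Proposition \ref{prop_spanSurj}) the set $\{c'\in\mathcal{C}:\pj{c'}\in M(\pj{\mathcal{A}},\pj{\mathcal{B}})\}$ has closed span equal to $\mathcal{C}$, so if $c\neq 0$ there is such a $c'$, with $\pj{c'}=M(\pj{a},\pj{b})$ for some $a,b$, and $\<c|c'\>\neq 0$. But $c'$ is proportional to $m(a,b)$, which lies in the closed span of $\{e_{ij}\}$ and is therefore orthogonal to $c$ --- a contradiction. Hence $c=0$, so $\{e_{ij}\}$ is complete, and with Step 1 it is an orthonormal basis of $\mathcal{C}$.

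The step I expect to require the most care is the orthogonality of the ``doubly mismatched'' vectors $e_{ij}$ and $e_{k\ell}$ with $i\neq k$ and $j\neq\ell$: here Proposition \ref{prop_statInd} does not apply verbatim, so one must descend to the underlying probability space through Definition \ref{def_comp} and combine monotonicity with Proposition \ref{prop_singleBorn}, taking care not to condition on impossible events. A lesser technical point, relevant only in the separable infinite-dimensional case, is to justify convergence of the double expansion of $m(a,b)$ so that it genuinely lies in the \emph{closed} span of $\{e_{ij}\}$.
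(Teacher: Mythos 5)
Your Step 1 (orthonormality) is essentially the paper's own argument: pass from $|\langle e_{ij}|e_{k\ell}\rangle|^2$ to $P(M(\pj{a_i},\pj{b_j})\,|\,M(\pj{a_k},\pj{b_\ell}))=P(\pj{a_i}\wedge\pj{b_j}\,|\,\pj{a_k}\wedge\pj{b_\ell})$, bound it by monotonicity and reduce via Proposition \ref{prop_singleBorn} to $\delta_{ik}\delta_{j\ell}$. That part is fine (and you handle the $j\neq\ell$ case a bit more explicitly than the paper does).

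Step 2 (completeness), however, has a genuine gap in the logical ordering. You take ``the bilinear map $m$ of Proposition \ref{prop_bilinearity}'' as given, but that proposition is conditional: \emph{if} a map $m$ with $\pj{m(a,b)}=M(\pj{a},\pj{b})$ exists, \emph{then} it is bilinear. What the fundamental theorem of projective geometry (Proposition \ref{prop_fundProj}) delivers at this stage is only the one-argument maps $m_b$ (for each fixed $b$) and $m_a$ (for each fixed $a$); a single globally consistent two-argument $m$ is not yet available --- its existence is settled only in Theorem \ref{theo}, whose proof in turn invokes Proposition \ref{prop_basis}. So your completeness argument either proves the proposition only conditionally on the existence of $m$, or it risks circularity with the theorem that is supposed to produce $m$. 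The paper avoids this by arguing at the level of $M$ alone: colinearity of the restricted maps (established inside Proposition \ref{prop_fundProj}) means they send spans to spans, so for any $(\pj{a},\pj{b})$ one has $M(\pj{a},\pj{b})\subset\pj{Sp(\{M(\pj{a_i},\pj{b_j})\})}$ --- first expand $b$ over $\{b_j\}$ at fixed $a_i$, then expand $a$ over $\{a_i\}$ --- hence the image of $M$ lies in the projective span of $\{e_{ij}\}$, and span surjectivity (Proposition \ref{prop_spanSurj}) forces that span to be all of $\mathcal{C}$. Your orthogonal-complement argument then goes through verbatim with $m(a,b)$ replaced by any representative of $M(\pj{a},\pj{b})$; rephrased this way, your proof becomes correct and matches the paper's intent without presupposing $m$. (Your continuity worry in the separable case is handled the same way: colinearity plus the limit argument already carried out in Proposition \ref{prop_fundProj}, rather than continuity of a not-yet-constructed $m$.)
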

 
 \begin{proof}
  
  Since $M$ is a map on the projective spaces, it maps spans to spans.
  Since the span of the basis of $\mathcal{A}$ and $\mathcal{B}$ is
  the whole space, then the span of the image of the basis is the
  whole image of $M$. By \ref{prop_spanSurj}, the image of $M$
  coincides with the whole $\mathcal{C}$. Therefore, given
  $\{a_i\}_{i\in I}$ and $\{b_j\}_{j \in J}$ bases of $\mathcal{A}$
  and $\mathcal{B}$ respectively, any set of unit vectors
  $\{e_{ij}\}_{(i,j) \in I \times J} \subset \mathcal{C}$ such that
  $e_{ij} \in M(\pj{a_i}, \pj{b_j})$ spans the whole $\mathcal{C}$.
  
  Now consider $P(M(\pj{a_i},\pj{b_j})|M(\pj{a_k},\pj{b_l}))$. If $i=k$ and $j=l$ we have $P(M(\pj{a_i},\pj{b_j})|M(\pj{a_k},\pj{b_l}))= P(M(\pj{a_i},\pj{b_j})|M(\pj{a_i},\pj{b_j})) = 1$. If $i\neq k$, we have $P(M(\pj{a_i},\pj{b_j})|M(\pj{a_k},\pj{b_l})) = P(\pj{a_i} \wedge \pj{b_j} | \pj{a_k} \wedge \pj{b_l}) \leq P(\pj{a_i} | \pj{a_k} \wedge \pj{b_l})$. By \ref{prop_singleBorn} we have $P(\pj{a_i} | \pj{a_k} \wedge \pj{b_l}) = P(\pj{a_i} | \pj{a_k}) = 0$ since $a_i$ and $a_k$ are different elements of an orthogonal basis. Therefore we have $P(M(\pj{a_i},\pj{b_j})|M(\pj{a_k},\pj{b_l}))=\delta_{ik}\delta_{jl}$ which means $\<e_{ij} | e_{kl} \> = \delta_{ik}\delta_{jl}$.
  
  The elements $e_{ij}$ form a set of orthonormal vectors that span the whole space and are therefore a basis.
 \end{proof}
 
 \begin{thrm}[Composite system theorem]\label{theo}
  The state space of a composite system of independent systems is given by the tensor product of the spaces of the component systems.
 \end{thrm}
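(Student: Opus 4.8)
The plan is to assemble the propositions already proved into the logical chain (H1)$\wedge$(H2)$\wedge$(H3)$\Rightarrow$ tensor product. First I would invoke Definition~\ref{def_indep}: since the systems are independent, every pair of pure states is compatible, so by Proposition~\ref{prop_totality} the map $M:\pj{\mathcal{A}}\times\pj{\mathcal{B}}\to\pj{\mathcal{C}}$ of Definition~\ref{def_comp} is total (this is H1). Next, Proposition~\ref{prop_statInd} gives statistical independence, so for each fixed $\pj{b}$ the map $M_b(\pj{a})=M(\pj{a},\pj{b})$ preserves $P(\cdot\,|\,\cdot)$; Proposition~\ref{prop_fundProj} (the projective-geometry/Wigner step) then produces a semilinear map on vectors consistent with $M_b$, and Proposition~\ref{prop_bilinearity} upgrades this, across all $\pj{b}$ and symmetrically in the two arguments, to a genuinely \emph{bilinear} map $m:\mathcal{A}\times\mathcal{B}\to\mathcal{C}$ with $\pj{m(a,b)}=M(\pj{a},\pj{b})$ and $|\<m(a,b)|m(a',b')\>|^2$ equal to the product of the two component overlaps (this is H2). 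Span surjectivity (H3) is exactly Proposition~\ref{prop_spanSurj}. As explained in the remark after Proposition~\ref{prop_fundProj}, the anti-linear alternative is merely a change of convention (Hilbert space versus its dual), so I would fix the linear branch without loss of generality.

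With H1, H2 and H3 established, I would finish via the universal property of the tensor product. Since $m$ is bilinear, it factors uniquely as $m=I\circ T$ through the canonical bilinear map $T=\otimes:\mathcal{A}\times\mathcal{B}\to\mathcal{A}\otimes\mathcal{B}$, where $I:\mathcal{A}\otimes\mathcal{B}\to\mathcal{C}$ is linear. The theorem then reduces to showing that $I$ is an isomorphism of Hilbert spaces. Choose orthonormal bases $\{a_i\}$ of $\mathcal{A}$ and $\{b_j\}$ of $\mathcal{B}$; then $\{a_i\otimes b_j\}$ is an orthonormal basis of $\mathcal{A}\otimes\mathcal{B}$ and $I(a_i\otimes b_j)=m(a_i,b_j)$. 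By Proposition~\ref{prop_basis} the unit vectors $m(a_i,b_j)$ form an orthonormal basis of $\mathcal{C}$. Hence $I$ is a linear bijection between complete orthonormal systems, which extends by continuity to a unitary $\mathcal{A}\otimes\mathcal{B}\cong\mathcal{C}$; identifying the two spaces along $I$, the composition map $m$ becomes the tensor product $\otimes$ itself, which is the claim.

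The hard part is not in this last assembly --- by the time we reach the theorem the real work has been done in Propositions~\ref{prop_fundProj}, \ref{prop_bilinearity} and \ref{prop_basis} --- but in that preceding work: converting the projective map $M$ into an honest bilinear $m$ on vectors, checking that the gauge (global-phase) and linear-versus-antilinear choices can be made coherently in \emph{both} arguments at once without spoiling bilinearity, and, in infinite dimensions, ensuring that the map $I$ defined on finite linear combinations extends to all of $\mathcal{A}\otimes\mathcal{B}$ --- which is where separability of the Hilbert spaces (Postulate~\ref{post_state}), already used in the countable-sum step of Proposition~\ref{prop_fundProj}, is essential. Once those points are settled, the universal-property factorization together with the basis count forces $I$ to be an isomorphism and completes the proof.
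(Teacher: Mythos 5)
Your proposal is correct and follows essentially the same route as the paper's own proof: assemble H1--H3 from the earlier propositions, factor the bilinear $m$ through the canonical map via the universal property of the tensor product, and use Proposition~\ref{prop_basis} to show the induced linear map (your $I$, the paper's $\hat m$) sends an orthonormal basis of $\mathcal{A}\otimes\mathcal{B}$ to one of $\mathcal{C}$, hence is a Hilbert-space isomorphism identifying $m$ with $\otimes$. Your added remarks on the linear-versus-antilinear convention, coherent gauge choices in both arguments, and continuity/separability in infinite dimensions are consistent with (and slightly more explicit than) the paper's treatment, but do not change the argument.
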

 
 \begin{proof}
  We are looking for a map $m : \mathcal{A} \times \mathcal{B} \to \mathcal{C}$ such that, for all $(a,b) \in \mathcal{A} \times \mathcal{B}$ we have $\pj{m(a,b)} = M(\pj{a}, \pj{b})$. We saw in \ref{prop_bilinearity} that if $m$ exists, it must be bilinear.
  
  Now we show that, if $m$ exists, then $\mathcal{C} \cong \mathcal{A}
  \otimes \mathcal{B}$ (where $\cong$ indicates an isomorphism) and $m :
  \mathcal{A} \times \mathcal{B} \to \mathcal{A} \otimes \mathcal{B}$ is
  the standard map from the Cartesian product to the tensor product. As $m
  : \mathcal{A} \times \mathcal{B} \to \mathcal{C}$ is a bilinear
  operator, by the universal property of the tensor product we can find
  a linear operator $\hat{m} : \mathcal{A} \otimes \mathcal{B} \to
  \mathcal{C}$ such that $m(a, b) = \hat{m}(a \otimes b)$. By
  \ref{prop_basis} the set $\{ m(a_i, b_j)\}_{(i,j) \in I \times J}$
  forms a basis since $ m(a_i, b_j) \in M(\pj{a_i}, \pj{b_j})$ for all
  $(i, j)$, therefore $\hat{m}(\{ a_i \otimes b_j\}_{(i,j) \in I \times
   J})$ also forms a basis since $\hat{m}(a_i\otimes b_j)=m(a_i, b_j)$.
  By \ref{prop_totality}, each $m(a_i,b_j)$ will correspond to a unit
  vector in $\mathcal{C}$. We have $\<\hat m(a_i\otimes b_j)| \hat
  m(a_k\otimes b_\ell)\>_\mathcal{C} =\<m(a_i, b_j)| m(a_k,
  b_\ell)\>_\mathcal{C} = \delta_{ik}\delta_{j\ell} = \<a_i\otimes b_j|
  a_k \otimes b_\ell\>_{\otimes}$. The function $\hat{m}$, then,
  preserves the inner product across all elements of the basis and is
  therefore an isomorphism for Hilbert spaces. We have $\mathcal{C}
  \cong \mathcal{A} \otimes \mathcal{B}$ and $m(a, b) = \hat{m}(a
  \otimes b) \cong a \otimes b$.
  
  Given that the tensor product map exists and it satisfies all the properties $m$ must satisfy, then $m$ exists and it is the tensor product.
 \end{proof}

 In conclusion, as an aside, we note that in quantum field theory one
 tends to avoid problems connected with tensor products of infinite
 dimensional spaces by focusing on algebraic commutation structures,
 e.g.~\cite{giddins,roos}.  In particular, the recent MIP*=RE result
 \cite{mipre} implies that, in infinite dimensions, the tensor product
 is strictly less computationally powerful than the commutation
 structures, emphasizing the difference among these two structures, at
 least for the infinite-dimensional case. 

 Moreover, we note that in our paper we mainly focused on systems
 where no superselection rules or other restrictions to the state
 space are present: it is possible to prepare each subsystem of a
 composite system in a state that is independent of the other systems
 (preparation independence).  This is the only case in which the
 tensor product can be properly employed: the Hilbert space of
 composite systems that have restrictions is {\em not} the tensor
 product of the component spaces, but a subspace of it (e.g.~the
 anti-symmetric subspace for fermions).  Typically this is ignored in
 the literature, since the tensor product formalism is very convenient
 and is often used also in these cases, and superselection rules are
 typically avoidable \cite{susskind,zanardi,zanardilloyd}. A typical
 example \cite{tellerbook} comes from quantum field theory.  It is
 customary in basically all quantum optics literature to treat
 different modes of the radiation field (e.g.~the output of two
 lasers) as independent systems composed through the tensor product.
 Clearly the electromagnetic field is a single system and an agent who
 is able to access an optical mode that is a linear combination of the
 two will give a quantum description for it that cannot easily
 accommodate tensor products. Similarly, an agent can consider two
 electrons as two systems, joined with the tensor product, whenever
 they are distinguishable for all practical purposes (e.g.~the
 electrons are in widely separated physical locations). Yet, in
 principle, electrons are just excitations of a field, and the `true'
 quantum system is the field, not the single electrons
 \cite{teller,tellerbook}.  So, in quantum field theory, the quantum
 systems that should be joined through tensor products are the
 different fields and {\em not} the particles, which are just
 excitations (states) of the fields. In the words of Teller
 (\cite{tellerbook}, pg.22), tensor products can be safely used only
 if there is a ``primitive thisness'', which is captured in the
 definition of system.

\vskip 1\baselineskip
 To conclude, we give a schematic outline of the logical implications
 that led us to the result. This is an expanded version of Fig.~1 of
 the main paper:
 \begin{enumerate}
  \item P\ref{post_state}: states and observables postulate.
  
  \item P\ref{post_measurements}: Born rule (measurement postulate). 
  
  \item Def \ref{def_indep}: Preparation independence: systems are independent if the preparation of one does not affect the other.
  
  \item P\ref{post_measurements} $\Rightarrow$ \ref{prop_singleBorn}: the outcome probabilities depend only on the
  inner product.
  
  \item Def \ref{def_comp}: Composite system definition: A composite system is a
  collection of the subsystems (i.e.~all compatible states give a preparation) and only of the subsystems
  (i.e.~all composite preparations give non-trivial measurements on the
  subsystems).

  \item P\ref{post_measurements} + Def \ref{def_comp} $\Rightarrow$ \ref{prop_spanSurj} (H3): Span
  surjectivity (all composiste $\cal C$ are superpositions of ${\cal A}$ and ${\cal B}$).
  
  \item Def \ref{def_indep} + Def \ref{def_comp} $\Rightarrow$ \ref{prop_totality} (H1):
  Totality (all possible state pairs of the subsystems correspond to a state of the composite). 
  
  \item P\ref{post_state} + \ref{prop_singleBorn} +  Def \ref{def_comp} $\Rightarrow$ \ref{prop_statInd}: Statistical
  independence (if one subsystem does not change, the probability on the composite system is given by the probability of the subsystem that changes).
  
  \item \ref{prop_fundProj}: Fundamental theorem of projective geometry (preserving square of inner product leads to unique linear map)
  
  \item \ref{prop_statInd} + \ref{prop_fundProj} $\Rightarrow$ \ref{prop_bilinearity} (H2) composition map on vector spaces is bilinear.
  
  \item P\ref{post_measurements} + Def \ref{def_comp} + \ref{prop_spanSurj} (H3) $\Rightarrow$ \ref{prop_basis}: Basis
  carries over from subsystems to composite
  
  \item \ref{prop_totality} + \ref{prop_bilinearity} + \ref{prop_basis} $\Rightarrow$ \ref{theo}: the composition map is the tensor product.
  
 \end{enumerate}

 \subsection{Addendum}
 
 The above proof relies only on the independent preparation of subsystems and not their measurements. However, during the review process, an anonymous referee contributed a sketch for a proof that shows very directly that the state and measurement postulates also imply independence of measurements. The insight is that the mixture created by all possible measurement outcomes on $B$ must behave overall as a pure state on $A$. Since pure states are extreme points, this can only happen if every measurement outcome on $B$ leaves $A$ in a pure state, which makes the probability factorize.
 
 \begin{prop}
  Let $\pj{\psi}$ and $\pj{\phi}$ be two preparations for $A$ and $B$. Let $\pj{a}$ and $\pj{b}$ be two measurements on the respective systems. Then $P(\pj{a} \wedge \pj{b} | \pj{\psi} \wedge \pj{\phi}) = P(\pj{a} | \pj{\psi}) P(\pj{b} | \pj{\phi})$.
 \end{prop}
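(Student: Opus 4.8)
The plan is to refine the $B$-outcome $\pj{b}$ into a full projective measurement, apply the law of total probability together with the single-system Born rule, and then exploit that pure states are extreme points of the set of states. First I would set $z=\pj{\psi}\wedge\pj{\phi}$, which by Definition~\ref{def_comp} is just the composite preparation $M(\pj{\psi},\pj{\phi})$, and pick an orthonormal basis $\{|b_k\>\}$ of $\mathcal{B}$ with $\pj{b_1}=\pj{b}$, so that the outcomes $\{\pj{b_k}\}$ are exhaustive and mutually exclusive. For any $A$-outcome $\pj{a}$, additivity of probability (writing $\pj{a}$ as $\pj{a}$ conjoined with the sure event on $B$) and the chain rule give
\begin{align}
P(\pj{a}|z)=\sum_k P(\pj{b_k}|z)\,P(\pj{a}|\pj{b_k}\wedge z),
\end{align}
while Proposition~\ref{prop_singleBorn} identifies the left-hand side with $P(\pj{a}|\pj{\psi})$ and identifies $P(\pj{b_k}|z)$ with $p_k:=P(\pj{b_k}|\pj{\phi})$, where $\sum_k p_k=1$.

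Next I would read the right-hand side as a convex mixture of the descriptions of $A$ conditioned on the various $B$-outcomes. Each assignment $\pj{a}\mapsto P(\pj{a}|\pj{b_k}\wedge z)$ is a legitimate state of $A$ --- the state $A$ is left in when the $B$-measurement yields $\pj{b_k}$ --- and hence, by the state postulate extended to mixtures via the standard tools of \cite{ballentinebook}, corresponds to a density operator $\rho_k$ on $\mathcal{A}$; since a quantum state is determined by its measurement statistics, the displayed identity becomes the operator identity $\sum_k p_k\rho_k=|\psi\>\<\psi|$. But $|\psi\>\<\psi|$ is a pure state, i.e.\ an extreme point of the convex set of density operators on $\mathcal{A}$, so a convex combination can equal it only if $\rho_k=|\psi\>\<\psi|$ for every $k$ with $p_k>0$. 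In particular, whenever $P(\pj{b}|\pj{\phi})=p_1>0$ we obtain $P(\pj{a}|\pj{b}\wedge z)=P(\pj{a}|\pj{\psi})$ for all $\pj{a}$.

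To finish I would use the chain rule once more together with Proposition~\ref{prop_singleBorn}: $P(\pj{a}\wedge\pj{b}|z)=P(\pj{b}|z)\,P(\pj{a}|\pj{b}\wedge z)=P(\pj{b}|\pj{\phi})\,P(\pj{a}|\pj{b}\wedge z)$. If $P(\pj{b}|\pj{\phi})>0$ the previous paragraph turns this into $P(\pj{b}|\pj{\phi})\,P(\pj{a}|\pj{\psi})$; if $P(\pj{b}|\pj{\phi})=0$ both sides vanish, since $P(\pj{a}\wedge\pj{b}|z)\le P(\pj{b}|z)=0$. Either way $P(\pj{a}\wedge\pj{b}|\pj{\psi}\wedge\pj{\phi})=P(\pj{a}|\pj{\psi})\,P(\pj{b}|\pj{\phi})$, which is the claim.

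The hard part is the middle step: justifying that ``system $A$ conditioned on the $B$-outcome $\pj{b_k}$'' is itself a bona fide quantum state of $A$ --- a point of a convex set whose extreme points are exactly the pure states --- so that the extremality argument has somewhere to live. This is precisely where one leans on the standard extension of postulates (a)--(b) to mixed states, and it is the formal content of the slogan that the mixture over all $B$-outcomes must behave overall as a pure state on $A$. A minor additional nuisance is the bookkeeping when $\mathcal{B}$ is infinite-dimensional or $\pj{b}$ belongs to a continuous spectrum, in which case the sum over $k$ becomes an integral; this complicates the total-probability identity but leaves the extreme-point argument untouched.
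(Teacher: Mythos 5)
Your proposal is correct and follows essentially the same route as the paper's own proof: decompose over the exhaustive set of $B$-outcomes via total probability, use Proposition~\ref{prop_singleBorn} to identify the marginal with $P(\pj{a}|\pj{\psi})$ and the outcome weights with $P(\pj{b_k}|\pj{\phi})$, recognize the conditioned descriptions of $A$ as density operators so that $\sum_k p_k\rho_k=|\psi\>\<\psi|$, and invoke extremality of pure states to force $\rho_k=|\psi\>\<\psi|$ before factorizing with the chain rule. Your explicit treatment of the $P(\pj{b}|\pj{\phi})=0$ case and the explicit choice of an orthonormal basis containing $\pj{b}$ are minor tidiness improvements over the paper's sketch, not a different argument.
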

 \begin{proof}
  Consider $P(\pj{a} \wedge \pj{b} | \pj{\psi} \wedge \pj{\phi})$. We can imagine first performing the measurement on $B$ and then conditioning the result on $A$. We have $P(\pj{a} \wedge \pj{b} | \pj{\psi} \wedge \pj{\phi}) = P(\pj{a} | \pj{b}, \pj{\psi} \wedge \pj{\phi}) P(\pj{b} | \pj{\psi} \wedge \pj{\phi})$ where by ``$\pj{b}, \pj{\psi} \wedge \pj{\phi}$'' we mean that the systems were prepared in $\pj{\psi}$ and $\pj{\phi}$ respectively and $\pj{b}$ was measured. Any preparation on $A$ can be expressed with a mixed state, and any measurement on $B$ depends only on $\phi$, so we have $P(\pj{a} | \pj{b}, \pj{\psi} \wedge \pj{\phi}) P(\pj{b} | \pj{\psi} \wedge \pj{\phi}) = P(\pj{a} | \rho(\pj{b}, \pj{\psi}, \pj{\phi})) P(\pj{b} | \pj{\phi})$. We also must have $\sum_b P(\pj{a} \wedge \pj{b} | \pj{\psi} \wedge \pj{\phi}) = P(\pj{a} | \pj{\psi} \wedge \pj{\phi}) = P(\pj{a} | \pj{\psi})$.
  
  Putting it all together, we have $P(\pj{a} | \pj{\psi}) = \sum_b P(\pj{a} | \rho(\pj{b}, \pj{\psi}, \pj{\phi})) P(\pj{b} | \pj{\phi})$. Which means $|\psi\>\<\psi| = \sum_b P(\pj{b} | \pj{\phi}) \rho(\pj{b}, \pj{\psi}, \pj{\phi})$. But the only way that a mixture of mixed states can be equal to a pure state is if all mixed states are the same pure state. Therefore $\rho(\pj{b}, \pj{\psi}, \pj{\phi}) = |\psi\>\<\psi|$ for all $b$ and $\phi$. We finally have $P(\pj{a} \wedge \pj{b} | \pj{\psi} \wedge \pj{\phi}) = P(\pj{a} | \rho(\pj{b}, \pj{\psi}, \pj{\phi})) P(\pj{b} | \pj{\phi}) = P(\pj{a} | \pj{\psi}) P(\pj{b} | \pj{\phi})$.
 \end{proof}


\begin{references}
\bibitem{zurek} W.H. Zurek, Quantum Darwinism, Nature Phys. {\bf 5},
  181 (2009).
\bibitem{diracbook}P.A.M. Dirac, The principles of quantum mechanics,
  (Clarendon Press, Oxford, 1966).
\bibitem{vonneumannbook}J. von Neumann, Mathematical Foundations of
  Quantum Mechanics (Princeton Univ.  Press, 1955).
\bibitem{jauch}J.M. Jauch, Foundations of quantum mechanics
  (Addison-Welsey, 1968), pg.~176.
\bibitem{weyl} H. Weyl, Gruppentheorie und Quantenmechanik (Hirzel,
  Leipzig, 1928); translated by H. P. Robertson, The Theory of Groups
  and Quantum Mechanics (Methuen, London, 1931); reprinted by Dover,
  p. 91.
\bibitem{epr}A. Einstein, B. Podolsky, N. Rosen, Can
  quantum-mechanical description of physical reality be considered
  complete?, Phys. Rev. {\bf 47}, 777 (1935).
\bibitem{supp}See Supplemental Material [url], which includes Refs.
  \cite{holevo,giddins,roos,mipre,tellerbook,teller}.
\bibitem{ozawa}M. Ozawa, {Uncertainty relations for noise and
    disturbance in generalized quantum measurements}, Ann. Phys.  {\bf
    311}, 350 (2004).
\bibitem{masanes}L. Masanes, T.D. Galley, M.P. M\" uller, The
  measurement postulates of quantum mechanics are operationally
  redundant, Nat. Commun. {\bf 10}, 1361 (2019).
\bibitem{wootters}W.K. Wootters, Optimal Information Transfer and
  Real-Vector-Space Quantum Theory. In: Chiribella G., Spekkens R.
  (eds) Quantum Theory: Informational Foundations and Foils,
  Fundamental Theories of Physics, vol 181. Springer, Dordrecht
  (2016).
\bibitem{nielsenchuang}M. A. Nielsen and I. L. Chuang, Quantum
  Computation and Quantum Information (Cambridge University Press,
  Cambridge, 2000).
\bibitem{ballentinebook}L.E. Ballentine, Quantum Mechanics, a modern
  development (World Scientific, 2014).
\bibitem{ballentinepaper}L.E. Ballentine, Probability theory in
  quantum mechanics, Am. J. Phys. {\bf 54}, 883 (1986).
\bibitem{peres}A.~Peres, Classical interventions in quantum systems.
  II. Relativistic invariance, Phys. Rev. A {\bf 61}, 022117 (2000).
\bibitem{matolcsi} T. Matolcsi, Tensor product of Hilbert lattices and
  free orthodistributive product of orthomodular lattices, Acta Sci.
  Math. (Szeged), {\bf 37}, 263 (1975).
\bibitem{marmo} F.M. Ciaglia, A. Ibort, G. Marmo, On the Notion of
  Composite System, In: Nielsen F., Barbaresco F. (eds) Geometric
  Science of Information, Lecture Notes in Computer Science, vol
  11712. Springer (2019).
\bibitem{aerts} D. Aerts, I. Daubechies, Physical justification for
  using the tensor product to describe two quantum systems as one
  joint system, Helv. Phys. Acta {\bf 51}, 661 (1979).
\bibitem{susskind}Y. Aharonov, L. Susskind, Charge superselection
  rule, Phys. Rev. {\bf 155}, 1428 (1967).
\bibitem{zanardi}P. Zanardi, Virtual Quantum Subsystems, Phys. Rev.
  Lett {\bf 87}, 077901 (2001).
\bibitem{zanardilloyd} P. Zanardi, D.A. Lidar, S. Lloyd, Quantum
  Tensor Product Structures are Observable Induced, Phys. Rev. Lett.
  {\bf 92},060402 (2004).
\bibitem{cox}R.T. Cox, The Algebra of Probable Inference (J. Hopkins
  press, 1961).
\bibitem{fun} E. Artin: Geometric algebra, Interscience Publishers Inc
  (1957)
\bibitem{sch}E. Schr\"odinger, Annalen der Physik {\bf 102}, 81
  (1926); English translation in E. Schr\"odinger, {\em Collected
    papers on Wave Mechanics} (Blackie \& Son, London, 1928).
\bibitem{wig}E. Wigner, On Unitary Representations of the
  Inhomogeneous Lorentz Group, Annals Math. {\bf 40}, 149 (1939). 
\bibitem{hartle}J.B. Hartle, Quantum Mechanics of Individual Systems,
  Am. J.  Phys. {\bf 36}, 704 (1967).
\bibitem{holevo}A. Holevo, Probabilistic and statistical aspects of
  quantum theory, (North Holland, 1982).
\bibitem{giddins}S.B.~Giddings, Hilbert space structure in quantum
  gravity: an algebraic perspective. J. High Energ. Phys. 2015, 1
  (2015).
\bibitem{roos}H. Roos, Independece of Local Algebras in Quantum Field
  Theory, Comm.  Math. Phys. {\bf 16}, 238 (1970).
\bibitem{mipre}Z. Ji, A. Natarajan, T. Vidick, J. Wright, H. Yuen,
  MIP*=RE, arXiv:2001.04383
  (2020). 
\bibitem{tellerbook}P. Teller, An Interpretive Introduction to Quantum
  Field Theory (Princeton Univ. Press, 1997).  
\bibitem{teller}M. Redhead, P. Teller, Particles, Particle Labels, and
  Quanta: The Toll of Unacknowledged Metaphysics, Found. Phys. {\bf
   21}, 43 (1991).
\end{references}
\end{document}